\renewcommand{\boxed}[1]{\text{\fboxsep=.2em\fbox{\m@th$\displaystyle#1$}}}
\newenvironment{supplementary}{
    \setcounter{figure}{0}
    \setcounter{table}{0}



    \crefname{figure}{Supplementary Figure}{Supplementary Figures}
    \Crefname{figure}{Supplementary Figure}{Supplementary Figures}
    \crefname{table}{Supplementary Table}{Supplementary Tables}
    \Crefname{table}{Supplementary Table}{Supplementary Tables}
}{
}
\newtheorem{definition}{Definition}
\newtheorem{remark}{Remark}
\newtheorem{theorem}{Theorem}[section]
\newtheorem{proposition}{Proposition}[section]
\newtheorem{lemma}{Lemma}[section]
\newcommand*\bigcdot{\mathpalette\bigcdot@{.5}}
\newcommand*\bigcdot@[2]{\mathbin{\vcenter{\hbox{\scalebox{#2}{$\m@th#1\bullet$}}}}}
\newcommand{\indep}{\perp \!\!\! \perp}
\newcommand{\probP}{\text{I\kern-0.15em P}}
\title{The causal interpretation of acceleration factors}
\author[1,2]{Mari Brathovde \thanks{Corresponding author: Mari Brathovde. mari.brathovde@medisin.uio.no}}
\author[3,4]{Hein Putter}
\author[2]{Morten Valberg}
\author[5,6]{Richard A.J. Post}
\affil[1]{Oslo Centre for Biostatistics and Epidemiology, Oslo University Hospital, Oslo, Norway}
\affil[2]{Oslo Centre for Biostatistics and Epidemiology, Department of Biostatistics, Institute of Basic Medical Sciences, University of Oslo, Oslo, Norway}
\affil[3]{Department of Biomedical Data Sciences, Leiden University Medical Center,  The Netherlands}
\affil[4]{Mathematical Institute, Leiden University, The Netherlands}
\affil[5]{Department of Biostatistics, Erasmus Medical Center, University Medical Center Rotterdam, Rotterdam, the Netherlands}
\affil[6]{Department of Epidemiology, Erasmus Medical Center,
University Medical Center Rotterdam, the Netherlands}
\begin{document}
\maketitle
\begin{abstract}
In studies of time-to-event outcomes with unmeasured heterogeneity, the hazard ratio for treatment is known to have a complex causal interpretation. Accelerated failure time (AFT) models, which assess the effect on the survival time ratio scale, are often suggested as a better alternative because they model a parameter with direct causal interpretation while allowing straightforward adjustment for measured confounders. In this work, we formalize the causal interpretation of the acceleration factor in AFT models for data under independent censoring. We prove that the acceleration factor is a valid causal effect measure, even in the presence of frailty and treatment effect heterogeneity. Through simulations from structural causal models, we show that the acceleration factor better captures the causal effect than the hazard ratio when both AFT and conditional proportional hazards models apply. Additionally, we extend the interpretation to systems with time-dependent acceleration factors, illustrating the impossibility of distinguishing between a time-varying homogeneous effect and unmeasured effect heterogeneity. While the causal interpretation of acceleration factors is promising, we caution practitioners about potential challenges for the interpretation in the presence of effect heterogeneity.
\end{abstract}

\vspace{1em}  
\noindent \textbf{Keywords:} Accelerated failure time model; Causal inference; Effect heterogeneity; Frailty; Survival analysis 
\vspace{1em}

\section{Introduction}
It is widely recognized that even for randomized controlled trials (RCTs), estimands on the hazard rate scale, like the hazard ratio, may not be well-suited as causal estimands  (\cite{hernan2010hazards, post2024built, aalen2015does, martinussen2020subtleties, Fay2024, Dumas2025}).  The issue is that randomization can be lost over time in settings with unmeasured heterogeneity due to the inherent conditioning on survival when considering the hazard scale, i.e., the so-called built-in selection bias. The hazard ratio is affected by both the actual treatment effect and the difference in characteristics of treated and untreated survivors. Therefore, to achieve interpretable causal estimands, it is advisable to use effect measures that do not suffer from the built-in selection bias. For instance, the survival function is free from selection bias as it does not require conditioning on survival and will at any time $t$ concern the entire population rather than a subpopulation of survivors. Accordingly, estimands such as contrasts of survival functions or restricted mean survival times (RMST) are often suggested as favorable alternatives that have a straightforward causal interpretation (\cite{hernan2010hazards, post2024built}). Another alternative involves using accelerated failure time (AFT) models (\cite{hernan2010hazards, aalen2015does}). Unlike the commonly used Cox proportional hazard model, which assesses effects on the hazard scale, the AFT model measures effects on the survival time ratio scale. In particular, in AFT models, parameters act to accelerate (or decelerate) event times relative to a baseline time scale. 
Specifically, the ratio of survival times under treatment and no treatment (and, consequently, their quantiles) equals the acceleration factor $\theta$, which can be extended to a time-varying acceleration factor $\theta(t)$ allowing the ratio of quantiles to vary over $t$ (\cite[Chapter 5]{cox1984analysis}; \cite{robins1992semiparametric, Wei1992}). 

To control for confounders, survival curves can be adjusted directly using inverse probability of treatment (IPT) weights after estimation of the propensity scores (\cite{Xie2005}), or they can be adjusted by use of covariate adjustment in a time-to-event model. The latter is preferable when studying effect modification or if the relations of the confounders on the outcome are better known than the relations on the treatment assignment. Proportional hazard models can be used to adjust for covariates, but for valid causal inference, it will be necessary to subsequently derive the survival function from the fitted parameters on the hazard scale because of the complex causal interpretation of hazard ratios (\cite{hernan2010hazards}). 
On the contrary, the parameters of AFT models still concern the survival time ratio and may, therefore, be directly used to quantify the causal effect. AFT models can thus offer a more straightforward way to follow the general advice of evaluating causal effects through survival curves. By committing to a single estimand throughout the analysis, AFT models yields a workflow that is accessible to practitioners. Realize that all listed approaches to control for confounders are not valid in situations with time-varying treatments and treatment-confounder feedback, that are out of scope in this work, for which g-estimation can be used. To deal with the latter issue, \cite{hernan2005structural} introduces simple structural (i.e.,  causal) AFTs for potential time-to-event outcomes under different levels of an intervention, though not including effect heterogeneity. 

The focus of this paper is formalization of the causal interpretation of the estimands of AFT models in the presence of unobserved heterogeneity, both in the hazard rate under no treatment (i.e., frailty) and in treatment effect. We prove that the acceleration factor indeed yields an appropriate causal interpretation in the presence of frailty and heterogeneity in the treatment effect and present examples using structural causal models. The fact that frailty does not affect the causal interpretation in AFT models has been pointed out previously in the time-invariant setting without effect heterogeneity (\cite{keiding1997role, aalen2015does}), but is here formalized. This formalization highlights a key distinction between the acceleration factor and the estimand of the Cox model, while the latter estimand differs from the causal effect of interest in the presence of frailty (\cite{post2024built}), the acceleration factor maintains the desired causal interpretation. Moreover, our results apply to the general time-variant setting with effect heterogeneity. In these settings, the acceleration factor is time-varying and again has a clear and meaningful causal interpretation. 


\section{Notation and framework}
Let $T_i$ and $A_i$ denote the factual time-to-event outcome and treatment assignment of individual $i$. $T^a_i$ is the potential outcome if the individual $i$ had been assigned to treatment $a$. We will represent heterogeneity in $T_i^0$  using a random variable $U_{0i}$, which represents the frailty of individual $i$. 
Heterogeneity in the effect of treatment on the survival time, i.e. the relative rate at which $T^{a}$ progresses compared to $T^{0}$, is parameterized by the random variable $U_{1}$. Given $U_{0}=u_{0}$ and $U_{1}=u_{1}$ the conditional causal effect of treatment may be characterized by linking the quantiles of the conditional $T^{a}$ and $T^{0}$ distribution. 

\begin{definition}[Conditional causal acceleration factor]
    The conditional causal acceleration factor for treatment level $a$, equals 
\begin{equation}\label{explicit_form1}
\theta_c(u_1,a,t)
\coloneqq
\frac{1}{t}\,
S^{-}_{T^0 \mid U_0=u_0}
\!\left(
S_{T^a \mid U_0=u_0,\,U_1=u_1}(t)
\right),
\end{equation}
where
$S_{T}(t) = \mathbb{P}(T>t)$ and $
S^{-}_{T^0 \mid U_0=u_0}(p)
\coloneqq
\inf\{s\ge0 : S_{T^0 \mid U_0=u_0}(s)\le 1-p\}$ is the generalized left inverse.
\end{definition}

The conditional causal acceleration factor satisfies 
\begin{align}\label{quantile_relation}
S_{T^a \mid U_0=u_0,\,U_1=u_1}(t)
=
S_{T^0 \mid U_0=u_0}\!\left(t\,\theta_c(u_1,a,t)\right),
\end{align}
so that $\theta_{c}$ relates the quantiles of the potential outcome distributions and has the interpretation that the $1-S_{T^a|U_{0} = u_0,U_{1} = u_1}(t)$ quantile of $T^0 \,| \,U_0 = u_0$ is $\theta_{c}(u_{1}, a, t) \, t$ (\cite{pang2021flexible}). The absolute effect of the treatment on the quantiles still depends on $U_0$, while the relative effect is seen to only depend on $U_1$. Consequently, when $U_0 \indep U_1$, the interpretation of the conditional acceleration factor $\theta_c$ also applies when $U_0$ is marginalized out,
\begin{align} \label{coll-u0}
    S_{T^a | U_{1} = u_1}(t) =  
    \int S_{T^0 | U_{0} = u_0}(t \, \theta_{c}(u_{1}, a, t)) \, dF_{U_0 = u_0 | U_1 = u_1}(u_0) =  S_{T^0}(t \, \theta_{c}(u_{1}, a, t)).
\end{align}

\begin{remark}
Alternatively to \cref{quantile_relation}, one can consider

\begin{equation}
\theta_c^{\ast}(u_1,a,s) \coloneqq \frac{1}{s}
S^{-}_{T^a \mid U_0=u_0,\,U_1=u_1}
\!\left(
S_{T^0 \mid U_0=u_0}(s)
\right),
\end{equation}
 which has the interpretation that the $1-S_{T^0|U_{0} = u_0}(s)$ quantile of $T^a \,| \,U_0 = u_0, U_1 = u_1$ is $\theta^{\ast}_{c}(u_{1}, a, s) \, s$. Moving to the quantile scale,
%
\begin{equation}
    \theta_c (u_{1}, a, Q_{T^a | U_0 = u_0, U_1 = u_1}(p)) = 1/\theta_c^{\ast} (u_{1}, a, Q_{T^0 | U_0 = u_0}(p)), 
\end{equation} for $Q$ the quantile function, $Q_{T}(p) = \inf \{ t \in [0,\infty) \colon  p \leq 1-S_{T}(t) \}$
\end{remark}

Analogous to the conditional causal acceleration factor $\theta_c$ (\ref{quantile_relation}), the marginal causal acceleration factor $\theta$ can be defined so that the $1-S_{T^a}(t)$ quantile of $T^0$ equals $\theta(a, t)t$.
\begin{definition}[Causal acceleration factor]\label{def:caf}
The causal acceleration factor for treatment level $a$ equals
\begin{align} \label{acc:factor}
    \theta(a,t) \coloneqq \frac{1 }{t} \left( S_{T^0}^{-} \left( S_{T^a}(t) \right) \right).
\end{align}
\end{definition}
Throughout, we denote by \(\theta(a)\) the causal acceleration factor for treatment \(a\), which in general may vary over time \(t\). For brevity, we omit the explicit dependence on \(t\) when referring to the estimand $\theta(a, \cdot)$ itself, but include it when discussing its value at a specific time point. Cases where \(\theta(a,t)\) is constant in \(t\) are explicitly noted. 
We will present $p$ versus $\theta(a, Q_{T^a}(p))$ plots in this paper where $Q_{T^0} (p) = Q_{T^a}(p) \theta(a, Q_{T^a}(p))$ so that $\theta(a,Q_{T^a}(p))$ equals the ratio of quantiles of the $T^0$ and $T^a$ distribution. 


It is worth noting that, in addition to $\theta$, the following related estimand is sometimes referred to as the time-varying acceleration factor (\cite{crowther2023flexible}): 
\begin{align} \label{def:eta}
    \eta(a, t) &\coloneqq \frac{\partial}{ \partial t} \left( S_{T^0}^{-1} \left( S_{T^a}(t) \right) \right).
\end{align} In the case of time-invariant and homogeneous conditional causal acceleration factors, i.e. $\theta_{c}(u_{1}, a, t) = \theta_{c}(a)$, the estimands $\theta$ and $\eta$ are equal. 

Since the causal acceleration factor links the quantiles of the distributions of potential outcomes, contrasts of expectations of $T^a$ or $\log T^a$ are identified by $S_{T^0}$ and $\theta$ as shown in \cref{lemma:estimands}.
\begin{lemma} \label{lemma:estimands}

For 
$\theta$ as introduced in \cref{def:caf}, 
\begin{align}
   \mathbb{E}[\log T^a] - \mathbb{E}[\log T^0] &= \int^{\infty}_0 \log (t) \, d F_{T^0}\left( t \, \theta(a,t) \right) - \int^{\infty}_0 \log (t) \, d F_{T^0}(t),  \label{diff}\\
    \frac{\mathbb{E}[T^a]}{\mathbb{E}[T^0]} &= \frac{\int^{\infty}_0 S_{T^0} \left( t \, \theta(a,t) \right) \, dt}{\int^{\infty}_0 S_{T^0}(t) \, dt} \label{ratio}.
\end{align}
\end{lemma}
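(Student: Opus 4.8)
The plan is to reduce both identities to a single quantile-linking relation and then feed it into the standard representations of $\mathbb{E}[T^a]$ and $\mathbb{E}[\log T^a]$. First I would extract from \cref{def:caf} the key identity
$S_{T^a}(t) = S_{T^0}\!\left(t\,\theta(t)\right)$, which is the marginal analogue of the quantile-linking relation already recorded in the text. It follows by rearranging \eqref{acc:factor} as $t\,\theta(t) = S_{T^0}^{-1}\!\left(S_{T^a}(t)\right)$ and applying $S_{T^0}$ to both sides, which yields the claim as soon as $S_{T^0}\bigl(S_{T^0}^{-1}(p)\bigr) = p$ for the generalized inverse $S_{T^0}^{-1}(p) = \sup\{t : S_{T^0}(t) \geq p\}$. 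Under SCM (\ref{SCAFT}), $T^0 = \exp\!\left(f_0(U_0) + N_T\right)$ has a continuous law whenever $N_T$ is continuous, so $S_{T^0}$ is continuous and this composition identity holds; everything downstream is substitution.

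For the ratio \eqref{ratio}, I would invoke the layer-cake representation $\mathbb{E}[X] = \int_0^\infty S_X(t)\,dt$, valid for any nonnegative random variable. Applying it to $T^a$ and substituting the quantile-linking identity gives $\mathbb{E}[T^a] = \int_0^\infty S_{T^a}(t)\,dt = \int_0^\infty S_{T^0}\!\left(t\,\theta(t)\right)\,dt$, while $\mathbb{E}[T^0] = \int_0^\infty S_{T^0}(t)\,dt$; taking the quotient produces \eqref{ratio} directly.

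For the log-difference \eqref{diff}, I would write the expectation as a Stieltjes integral against the distribution function, $\mathbb{E}[\log T^a] = \int_0^\infty \log(t)\,dF_{T^a}(t)$, obtained from $\mathbb{E}[\log T^a] = \int x\,dF_{\log T^a}(x)$ by the change of variable $x = \log t$. The quantile-linking identity gives $F_{T^a}(t) = 1 - S_{T^a}(t) = 1 - S_{T^0}\!\left(t\,\theta(t)\right) = F_{T^0}\!\left(t\,\theta(t)\right)$, so the measure $dF_{T^a}$ coincides with the Stieltjes differential of the composite map $t \mapsto F_{T^0}\!\left(t\,\theta(t)\right)$. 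Hence $\mathbb{E}[\log T^a] = \int_0^\infty \log(t)\,dF_{T^0}\!\left(t\,\theta(t)\right)$, and subtracting $\mathbb{E}[\log T^0] = \int_0^\infty \log(t)\,dF_{T^0}(t)$ yields \eqref{diff}.

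The main obstacle is purely the regularity required to justify $S_{T^0}\bigl(S_{T^0}^{-1}(p)\bigr) = p$, i.e. that the generalized inverse genuinely inverts $S_{T^0}$: flat stretches or jumps of $S_{T^0}$ would break the clean substitution, so one must explicitly invoke continuity of the law of $T^0$. I would also state the integrability conditions ($\mathbb{E}\lvert\log T^0\rvert < \infty$ and $\mathbb{E}[T^0] < \infty$, which transfer to $T^a$ through the same identity) that render all four integrals finite, and I would emphasize that $dF_{T^0}\!\left(t\,\theta(t)\right)$ is to be read as the differential of the composite function $t \mapsto F_{T^0}\!\left(t\,\theta(t)\right)$ rather than as a change of variables inside $F_{T^0}$. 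With these points settled, both claims are immediate.
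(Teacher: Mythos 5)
Your proof is correct. Note that the paper itself provides no explicit proof of this lemma (the appendix only proves Theorem~\ref{prop:1} and Proposition~\ref{prop:kaplan-meier}); the lemma is stated as an immediate consequence of the quantile-linking identity $S_{T^a}(t)=S_{T^0}\left(t\,\theta(t)\right)$ that Definition~\ref{def:caf} encodes. Your argument is precisely the natural formalization of that implicit reasoning --- inverting \eqref{acc:factor} via continuity of $S_{T^0}$, then applying the layer-cake representation for \eqref{ratio} and the Stieltjes change of variables $F_{T^a}(t)=F_{T^0}\left(t\,\theta(t)\right)$ for \eqref{diff} --- with the added merit of making explicit the regularity conditions (continuity of the law of $T^0$, integrability) that the paper leaves unstated.
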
 
In the examples presented in this paper $\mathop{\mathbb{E}}[\log T^a] - \mathop{\mathbb{E}}[\log T^0]$ and $\frac{\mathbb{E}[T^a]}{\mathbb{E}[T^0]}$ are held constant across settings, while the underlying causal acceleration factor $\theta$ differs.

In practice, it is only possible to relate the quantiles of the distributions of observed event times among exposed ($T |\, A=1$) and non-exposed ($T |\,  A=0$). To do so, we define the observed acceleration factor $\theta_m$.
\begin{definition}[Observed acceleration factor]\label{def:sec}
The observed acceleration factor for treatment level $a$ equals
\begin{align} \label{eq:obs_acc_factor}
        \theta_m(a,t) \coloneqq \frac{1 }{ t} \left(S^{-}_{T | A = 0} \left( S_{T | A = a} (t ) \right) \right).
\end{align}
\end{definition} The key question adressed in this work is how \(\theta_m\) relates to \(\theta\), and consequently, how \(\theta_m\) should be interpreted causally.

\section{Identifiability results}
In this section, we show that under causal consistency and in the absence of confounding, even in the presence of both frailty $U_0$ and an individual effect modifier $U_1$, the observed acceleration factor $\theta_m$ has a clear causal interpretation as it equals the causal acceleration factor $\theta$. 
Additionally, we provide the result identifying the causal acceleration factor from censored data under the assumption of independent censoring.
\begin{theorem} \label{prop:1}
If $T^{a} \indep A$ (exchangeability) and $T^{A} = T$ (causal consistency), then 
\begin{align*}
    \frac{1}{ t} \left( S_{T\mid A=0}^{-} \left( S_{T\mid A=a}(t) \right) \right) &=  \frac{1}{ t} \left( S_{T^0}^{-} \left( S_{T^a}(t) \right) \right), \qquad  \mathrm{for \, all \,} t,
\end{align*}
i.e. the observed acceleration factor $\theta_m$ and the causal acceleration factor $\theta$ are equal irrespective of the presence of both frailty $U_0$ and effect heterogeneity $U_1$.
\end{theorem}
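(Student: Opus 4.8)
The plan is to reduce the theorem to two marginal structural identifications and then to observe that equality of the generalized inverses follows for free. Note that the left-hand side $\theta_m$ of \cref{def:sec} is the functional $(S,\tilde S)\mapsto \tfrac{1}{t}\,S^{-1}\!\left(\tilde S(t)\right)$ evaluated at the observed survival functions $S_{T\mid A=0}$ and $S_{T\mid A=a}$, whereas the right-hand side $\theta$ of \cref{def:caf} is the same functional evaluated at the potential-outcome survival functions $S_{T^0}$ and $S_{T^a}$. Hence it suffices to establish the pointwise identities $S_{T\mid A=0}(t)=S_{T^0}(t)$ and $S_{T\mid A=a}(t)=S_{T^a}(t)$ for all $t$, and the conclusion will follow by substitution.

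First I would prove that each potential outcome is independent of treatment. For any fixed level $a$, the structural assignment in \cref{SCAFT} writes $T^a$ as a measurable function of $(U_0,U_1,N_T)$ alone, while $A=f_A(N_A)$ depends only on $N_A$. The no-confounding assumption $N_A \indep (U_0,U_1,N_T)$ therefore yields $T^a \indep A$ for every $a$ (in particular for $a=0$). Combining this with consistency, $T=T^{A}$, gives for each level $a$
\begin{align*}
    S_{T\mid A=a}(t) &= \mathbb{P}(T>t\mid A=a) = \mathbb{P}(T^a>t\mid A=a) \\
    &= \mathbb{P}(T^a>t) = S_{T^a}(t),
\end{align*}
and the identical computation with $a$ replaced by $0$ gives $S_{T\mid A=0}=S_{T^0}$. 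Here $S_{T^a}$ is exactly the marginal survival function of \cref{def:marginal_surv}, obtained by marginalizing the conditional identity $S_{T^a\mid U_0=u_0,U_1=u_1}(t)=S_{T^0\mid U_0=u_0}\left(\int_0^t \exp f_1(u_1,a,s)\,ds\right)$ over $(U_0,U_1)$.

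Finally I would conclude. Because $S_{T\mid A=0}$ and $S_{T^0}$ agree as functions, and the generalized inverse is defined purely in terms of the underlying survival function, the two generalized inverses coincide, $S^{-1}_{T\mid A=0}\equiv S^{-1}_{T^0}$. Substituting this together with $S_{T\mid A=a}\equiv S_{T^a}$ into the definition of $\theta_m$ yields $S^{-1}_{T\mid A=0}\!\left(S_{T\mid A=a}(t)\right)=S^{-1}_{T^0}\!\left(S_{T^a}(t)\right)$, and dividing by $t$ gives $\theta_m(t)=\theta(t)$ for all $t$.

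I expect the argument to be short, so the main point is conceptual rather than technical. The one step deserving care is the claim $T^a \indep A$, which rests entirely on reading off from \cref{SCAFT} that $T^a$ contains none of the exogenous noise $N_A$ driving treatment, so that no-confounding transfers directly to the potential outcome. The substantive message, which I would emphasize, is that the unmeasured heterogeneity $(U_0,U_1)$ creates no obstacle: it is marginalized in precisely the same way on both sides and is independent of treatment, so it cancels, in sharp contrast to the hazard scale where conditioning on survival reintroduces dependence on the frailty. I would also remark that no subtlety arises from flat or discontinuous survival functions, since identical functions produce identical generalized inverses regardless of such features.
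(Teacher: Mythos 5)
Your proof is correct and follows essentially the same route as the paper's primary ("marginal") proof: read off from SCM (\ref{SCAFT}) that $T^a$ is a function of $(U_0,U_1,N_T)$ only, so no-confounding gives $T^a \indep A$, then exchangeability and consistency yield $S_{T\mid A=a}=S_{T^a}$ (and likewise for $a=0$), from which equality of the acceleration factors follows by substitution. Your explicit closing remark that identical survival functions produce identical generalized inverses is left implicit in the paper but is the same argument.
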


Besides demonstrating the identifiability of $\theta$ from observed data, this result highlights a key distinction between the observed acceleration factor and both the observed hazard ratio and the observed hazard difference. The latter estimands are defined on the hazard scale, which introduces selection due to the dependence between treatment assignment and $U_0, U_1$ introduced when conditioning on survival (\cite{post2024bias, post2024built}). In contrast, the AFT estimand, being on the survival scale, avoids conditioning on survival and is therefore not subject to selection bias. Specifically, the observed hazard ratio at time $t$, 
\begin{align} \label{eq:cox_estimand}
\frac{\lim_{h \to 0} h^{-1} \mathbb{P} \left( T \in [t,t+h) \, | \, T \geq t, A = a\right)}{\lim_{h \to 0} h^{-1} \mathbb{P} \left( T \in [t,t+h) \, | \, T \geq t, A=0\right)},
\end{align}
has a complicated causal interpretation and only corresponds to the conditional (causal) hazard ratio of interest when the effect on the hazard scale is multiplicative and when there is no frailty and no effect heterogeneity (\cite{post2024built}). The observed hazard difference,
\begin{align*}
\lim_{h \to 0} h^{-1} \mathbb{P} \left( T \in [t,t+h) \, | \, T \geq t, A = a\right) - \lim_{h \to 0} h^{-1} \mathbb{P} \left( T \in [t,t+h) \, | \, T \geq t, A = 0\right),
\end{align*}
matches the conditional (causal) hazard difference of interest when the effect on the hazard scale is additive, in presence of frailty but only in the absence of effect heterogeneity (\cite{post2024bias}). In contrast,  $\theta_m$ maintains the desired causal interpretation even when both $U_0$ and $U_1$ follow a non-degenerate distribution. The fact that frailty does not induce selection bias in AFT models has been pointed out previously for time-invariant acceleration factors (\cite{keiding1997role, aalen2015does}) but is formalized and extended with effect heterogeneity and to the time-variant setting in \cref{prop:1}.

Most time-to-event observations are subject to censoring. For each individual $i$, we observe a (possibly) right-censored survival time $\widetilde{T}_i$ together with an indicator $D_i$ that takes the value $1$ when $\widetilde{T}_i = T_i$  and the value $0$ when $\widetilde{T}_i < T_i$. Under the assumption of independent right censoring, 
\begin{align} \label{cond:indep_cens}
 \lim_{h \to 0} \frac{1}{h} \mathbb{P} \left( \tilde{T} \in [t, t+h), D = 1 \, | \, \tilde{T} \geq t, A = a \right) &=  \lim_{h \to 0} \frac{1}{h} \mathbb{P} \left( T \in [t, t+h) \, | \, T \geq t, A = a \right), \,\, \forall \, a,
\end{align} the causal acceleration factor is also identified by the censored data. 
\begin{proposition} \label{prop:kaplan-meier}
Under $T^{a} \indep A$ (exchangeability), $T^{A} = T$ (causal consistency) and the independent censoring assumption \cref{cond:indep_cens}, then
\begin{align*}
    \theta(a,t) &= \frac{1}{t} \left(S_{T | A = 0}^{-} \left( S_{T | A = a} \left(t \right)\right)  \right),
\end{align*}
for all $t$ in the support of the censored survival times, where
\begin{align*}
    S_{T | A = a}(t) &= \exp \left( - \int_0^t  \lim_{h \to 0} \frac{1}{h} \mathbb{P} \left( \tilde{T} \in [u, u+h), D = 1 \, | \, \tilde{T} \geq u, A = a \right) \, du \right).
\end{align*}
\end{proposition}

In the remainder of this paper, we focus on a binary treatment 
$a$. Consequently, the argument 
$a$ can be omitted from the estimands, simplifying the notation as $\theta_c(U_1, a, t) = \theta_c(U_1, t)$, $\theta( a, t) = \theta(t)$ and $\theta_m(a, t) = \theta_m(t)$. 

\section{Acceleration factors in presence of unmeasured heterogeneity}
The acceleration factor is defined as a functional of the marginal survival distributions of the potential outcomes and does not rely on a structural AFT model. Structural AFT models may be viewed as special cases that impose additional restrictions on this general effect measure. To illustrate the clear causal interpretation of $\theta_{m}$ we will parameterize cause-effect relations by employing a structural causal model (SCM), which consists of a joint probability distribution of $(N_A, U_0, U_1, N_T)$ and a collection of structural assignments $(f_A, f_0, f_1)$ (for more details, see \cite{post2024built}). In the next two subsections, we study settings where the counterfactual lifetimes are decreased or increased by a time-invariant rate, $\theta_c(U_1, t) \equiv \exp f_1(U_1, 1)$, so that potential event times are linked via the structural equation 
\begin{align*}
    T^1_i &\coloneqq T^0_i \ \exp f_1(U_{1i}, 1)^{-1}.
\end{align*} All code used in the examples presented
in this section can be found at \url{https://github.com/marbrath/causal_AFT}.
\subsection{Effect homogeneity} \label{sec:homo}
We first consider systems with homogeneous causal effects and frailty as presented in SCM \eqref{SCAFT-effect-homo}. This is the setting that is typically considered when discussing the causal interpretation of the observed hazard ratio (\cite{hernan2010hazards, aalen2015does}). 

\begin{align} \label{SCAFT-effect-homo}
A_i &\coloneqq f_A(N_{A_i}), \nonumber\\
T^0_i &\coloneqq \inf \left\{ t > 0 \,\colon
\exp\!\left(-\int_0^t f_0(U_{0i}, s)\, ds \right)
\leq N_{T_i} \right\}, \nonumber\\
 T^1_i &\coloneqq T^0_i / \exp f_1(1). 
\end{align}
Here, $f_A$ denotes the generalized inverse of the distribution function of $A$, $f_0(U_0,t)$ is the conditional hazard rate of $T^0$ at time $t$ given the frailty $U_0$, and $f_1(1)$ is the log survival time ratio. Moreover, $N_{A_i}\sim \mathrm{Unif}[0,1]$, and $U_{0i} \indep N_{T_i}$. Note that in this case, $\theta\equiv \exp f_1(1)$, $\theta_c$ and $\eta$ all coincide. By \cref{lemma:estimands}, under effect homogeneity, the acceleration factor can be interpreted in terms of the estimands \cref{diff}, \cref{ratio} as below,
\begin{align}
    \mathop{\mathbb{E}}[\log T^1] - \mathop{\mathbb{E}}[\log T^0] &= \log 1 /\theta, \label{eq:est1} \\
    \frac{\mathbb{E}[T^1]}{\mathbb{E}[T^0]} &= 1 / \theta. \label{eq:est2}
\end{align}
Hence, the causal effect is constant across all quantiles of the potential outcome distributions. Moreover, the conditional causal acceleration factor $\theta_c$ equals the marginal acceleration factor $\theta$, i.e. $\theta$ is collapsible over $U_0$. The collapsibility with respect to $U_0$ follows from identity (\ref{coll-u0}), where it is shown that the conditional acceleration factor is invariant under marginalization of $U_0$. This argument parallels the approach of \cite{daniel2021making}, who characterize collapsibility through the so-called characteristic collapsibility function (CCF). In the present AFT setting, treatment acts by multiplying with a constant $1/\exp f_1(1)$, so the corresponding CCF is simply the time scaling map $t \mapsto t / \exp f_1(1)$. Because this map is linear it commutes with marginalization over $U_0$, yielding collapsibility. Due to collapsibility, the observed acceleration factor \(\theta_m\) admits both a conditional (individual-level) interpretation, as the ratio of quantiles of the conditional  potential outcome distributions, and a marginal (population-level) interpretation, as the ratio of quantiles of the corresponding marginal distributions.

In the absence of censoring, one can employ the above formulations of the acceleration factor. We will now consider the formulation of the acceleration factor given by \cref{eq:est1}, and thus also the AFT model on the log scale, 
\begin{align*} 
     \log T^1_i &\coloneqq \log T^0_i + f_1(1). 
\end{align*}
If $f_0(U_{0i}, t) = \frac{\kappa}{\sigma}t^{\frac{1}{\sigma} -1 }U_{0i}$ and $f_1(1) = \beta \sigma$, then 
\begin{align} \label{eq:AFT_logform}
    \log T^1_i &= - \log (\kappa) \, \sigma - \log (U_{0i}) \, \sigma - \beta \, \sigma \,  + \sigma \, W_i.  
\end{align}
It can be shown that \cref{eq:AFT_logform} can be reformulated as a Weibull (conditional) proportional hazards model (\cite[Chapter 5]{cox1984analysis}),
\begin{align} \label{eq:weibulll}
    \lambda^1_i(t) &= \frac{\kappa}{\sigma} t^{\frac{1}{\sigma}-1}  U_{0i} \exp(\beta),
\end{align} where given $U_{0i}=u_{0}$, $\lambda^{1}_{i}(t)$ equals $\lim_{h\rightarrow 0} h^{-1}\mathbb{P}\left(T^{1} \in [t,t+h) \mid T^{1} \geq t, U_{0i}=u_{0} \right)$, the conditional hazard rate of the potential outcome of individual $i$ under treatment $a=1$. 

As shown in \cite{post2024built}, when the proportional hazards assumption holds conditionally on $U_0$ (\ref{eq:weibulll}), the observed hazard ratio, 
deviates from the conditional (causal) hazard ratio $\exp(\beta)$ and is time-variant. Hence, the proportional hazards assumption does not hold marginally, and the estimand of the misspecified marginal (as $U_0$ is not observed) Cox's proportional hazards model is the average of the logarithm of observed hazard ratios weighted by the observed event-times and is therefore affected by the censoring distribution (\cite{post2024built}). 

To illustrate that Theorem \ref{prop:1} applies, we derive the value of $\theta_m$ on simulated data where $T | A, U_0$ satisfies \cref{eq:weibulll}. $A$ is randomly assigned, $\mathbb{P}( A = 1) = 0.5$ and $N_A \indep (U_0, W)$. Let $\sigma = 1/3, \kappa = 1/60$ and $U_{0}$ follows a Gamma, $U_0 \sim \Gamma(\rho^{-1}_0, \rho_0)$, or inverse Gaussian, $U_0 \sim \mathrm{IG}(1, \rho^{-1}_0)$, distribution so that $\mathbb{E}[U_0] = 1, \mathrm{var}(U_0) = \rho_0$. Since $S_{T^{1}}(t) = S_{T^{0}}(t \exp(\beta)^{1/3})$, the relation between $\theta$ and the conditional (causal) hazard ratio is $\theta = \exp(\beta)^{1/3}$. We first consider the setting without censoring and use the fact that $\theta_{m} = \exp \left( -\left( \mathbb{E} [\log T |A=1] - \mathbb{E}[\log T | A=0] \right) \right)$ to empirically estimate $\theta_{m}$ in each simulation of a large sample ($n_{\mathrm{obs}}=10^5$). 

For comparison, per simulated dataset, also the Cox estimate was obtained by fitting the marginal Cox model to $T |A$ using the \texttt{coxph} function from the \texttt{R} package \texttt{survival}. While indeed $\theta_m$ equals $\theta$ in the presence of $U_0$ (cf. \cref{prop:1}), the Cox estimand deviates from the conditional (causal) hazard ratio so that the Cox estimator is a biased estimator for $\exp(\beta)$. The latter bias is affected by the frailty distribution and increases with increasing frailty variance as shown in \cref{table:1}.For Gamma frailty with \(\exp(\beta) = 1/3 \), the mean of the exponential of the 1000 Cox estimates equals \(0.477\), \(0.577\), and \(0.695\) for \(\rho_{0} = 0.5, 1,\) and \(2\), respectively. When \(\exp(\beta) = 3\), the corresponding means are \(2.095\), \(1.732\), and \(1.439\).

\begin{table}[h]
\centering
\begin{subtable}{\textwidth}
\centering
\begin{tabular}{cc|cc|cc}
$U_0 \sim$ & $\rho_0$ & $\theta$ & $\mathbb{E}[\hat{\theta}_m]$ & $\exp \beta$ & $ \mathbb{E}[\exp \hat{\beta}] $\\[4pt]
\hline
     & 0.5 & $0.693$ & $0.693 \, (0.693, 0.694)$ & 1/3 & $0.477 \, (0.467, 0.488)$ \\
$\Gamma(\rho^{-1}_0, \rho_0)$
     & 1 & $0.693$ & $0.693 \, (0.693, 0.694)$ & 1/3 & $0.577 \, (0.558, 0.598)$ \\
     & 2 & $0.693$ & $0.693 \, (0.693, 0.694)$ & 1/3 & $0.695 \, (0.663, 0.728)$ \\
\hline
     & 0.5 & $0.693$ & $0.693 \, (0.693, 0.694)$ & 1/3 & $0.421 \, (0.415, 0.427)$ \\
$\mathrm{IG}(1, \rho^{-1}_0)$
     & 1 & $0.693$ & $0.693 \, (0.693, 0.694)$ & 1/3 & $0.459 \, (0.450, 0.468)$ \\
     & 2 & $0.693$ & $0.693 \, (0.693, 0.694)$ & 1/3 & $0.495 \, (0.483, 0.508)$ \\
\end{tabular}
\caption{$\beta = \log(1/3)$ \label{tbl:beta_1}}
\end{subtable}
\begin{subtable}{\textwidth}
\centering
\begin{tabular}{cc|cc|cc}
$U_0 \sim$ & $\rho_0$ & $\theta$ & $\mathbb{E}[\hat{\theta}_m]$ & $\exp \beta$ & $ \mathbb{E}[\exp \hat{\beta}] $\\[4pt]
\hline
     & 0.5 & $1.442$ & $1.442 \, (1.442, 1.443)$ & 3 & $2.095 \, (2.048, 2.143)$ \\
$\Gamma(\rho^{-1}_0, \rho_0)$
     & 1 & $1.442$ & $1.442 \, (1.442, 1.443)$ & 3 & $1.732 \, (1.673, 1.793)$ \\
     & 2 & $1.442$ & $1.442 \, (1.442, 1.443)$ & 3 & $1.439 \, (1.374, 1.508)$ \\
\hline
     & 0.5 & $1.442$ & $1.442 \, (1.442, 1.442)$ & 3 & $2.375 \, (2.340, 2.411)$ \\
$\mathrm{IG}(1, \rho^{-1}_0)$
     & 1 & $1.442$ & $1.442 \, (1.442, 1.443)$ & 3 & $2.181 \, (2.137, 2.225)$ \\
     & 2 & $1.442$ & $1.442 \, (1.442, 1.443)$ & 3 & $2.020 \, (1.970, 2.071)$ \\
\end{tabular}
\caption{$\beta = \log(3)$ \label{tbl:beta_2}}
\end{subtable}
\caption{Empirical mean and corresponding $95 \%$ confidence interval of the estimators obtained by fitting marginal AFT and propotional hazard models to observations $T | A$ when the true model is a Weibull proportional hazards model, $\lambda^a_i(t) = (t^2/20) U_{0i} \exp(\beta a)$, $\beta = \log \theta^{3}$, $n_{\mathrm{obs}} = 10^5, \, n_{\mathrm{sim}} = 1000$. See associated $S_{T^0}$ and $S_{T^1}$ in \cref{fig:sc_tbl1}. 
}
\label{table:1}
\end{table} 

In the presence of censoring, the formulation of \(\theta_m\) on the survival scale (\cref{eq:obs_acc_factor}) applies and is unaffected by neither \(U_0\) nor the censoring distribution. While this result is immediate from Theorem~\ref{prop:1}, it is worth noting since the Cox estimator, in contrast, is influenced by both. In \cref{fig1-cox} this is demonstrated by Cox estimates  ($n_{\mathrm{obs}} = 10^6$) for Gamma distributed $U_0$ ($\mathrm{var}(U_0) = 1$), $\theta=0.693$ with a varying follow-up time (equal to quantiles of the $T^{1}$ distribution) and independent censoring given by an exponentially distributed censoring time $C$ with varying means.
The expected value of $\hat{\theta}_{m}$ does not depend on the follow-up time or the censoring distribution and equals $\theta$. The deviation between the expected value of the Cox estimator and the true estimand $\beta$
increases with longer follow-up and, on average, longer censoring times. For example, for studies with follow-up time equal to $6.46$ $(F_{T^1}(6.46)=0.6)$ the exponential of the Cox estimate equals $0.49$ in absence of censoring and $0.45$ for exponentially distributed censoring times with mean equal to $0.5 \mathbb{E}[T^1]= 3.41$ and deviates from the exponential of the conditional (causal) hazard ratio equal to $\exp (1/3)$.     

\begin{figure}[htbp]
\centering
\resizebox{\linewidth}{!}{\input{fig1NEW}}
\caption{
Cox estimator (purple markers) evaluated at  follow-up times $t_{FU}$ and $\hat{\theta}_m$ (green curves) evaluated on $(0,t_{FU}]$ with follow-up times corresponding to increasing quantiles of the $T^{1}$ distribution (x-axis) and independent censoring times $T_C$ with varying means. Marker shape and line type indicate expected censoring, $\lambda_i^a(t) = (t^2/20) U_{0i} \exp(\beta a)$, $U_0 \sim \Gamma(1,1)$, $\exp(\beta)=1/3$, $\theta=\exp(\beta)^{1/3}$ and $n_{\mathrm{obs}}=10^6$.}
\label{fig1-cox}
\end{figure}
\newpage

\subsection{Effect heterogeneity} \label{subsec:effect_hetero}
Next, we consider heterogeneity of the time-invariant conditional acceleration factor as described in SCM (\ref{SCAFT_lin_rel}). 
\begin{align}  \label{SCAFT_lin_rel}
A_i &\coloneqq f_A(N_{A_i}), \nonumber\\
T^0_i &\coloneqq \inf \left\{ t > 0 \,\colon
\exp\!\left(-\int_0^t f_0(U_{0i}, s)\, ds \right)
\leq N_{T_i} \right\}, \nonumber\\
T^1_i &\coloneqq T^0_i /  \exp f_1(U_{1i}, 1). 
\end{align}

Here, $f_A$ denotes the generalized inverse of the distribution function of $A$, $f_0(U_0,t)$ is the conditional hazard rate of $T^0$ at time $t$ given $U_0$, and $f_1(U_1,1)$ is the conditional log survival time ratio. Moreover, $N_{A_i}\sim \mathrm{Unif}[0,1]$, and $(U_{0i},U_{1i}) \indep N_{T_i}$.

In the presence of effect heterogeneity \(U_1\), a time-varying \(\theta\) arises naturally because the marginal time scale required to match the mixture of stratum-specific survival curves changes with \(t\). Equivalently, the causal effect varies per quantile of the potential outcome distributions and cannot be summarized with a single value. Consequently, results derived under effect homogeneity no longer hold. The conditional (that now depend on the level of $U_{1}$) and marginal causal estimands no longer coincide, implying that the observed acceleration factor represents only a population-level causal effect.  Moreover, \(\theta\) summarizes the cumulative time-scaling of survival under treatment \(a\) up to time \(t\), whereas \(\eta\) captures the instantaneous rate of change at time \(t\) among those still at risk. While \cref{eq:est1} and \cref{eq:est2} are the same for the upcoming examples, \(\theta\) differs across them, indicating that a single-valued estimand can no longer represent the causal effect.


We extend the example of the previous subsection with effect heterogeneity, in particular $f_1(U_1, 1) = \log U_1$, $U_1 \indep U_0$, such that $T^1 = T^0 / U_1$. We consider a setting where $U_1$ equals $\mu_1$ ($<1$ for individuals that benefit) with probability $p_1$, $\mu_2$ ( $> 1$ for individuals that are harmed) with probability $p_2$ or $1$ (for individuals that are not affected). We refer to this distribution as the Benefit-Harm-Neutral, BHN($p_1$, $\mu_1$, $p_2$, $\mu_2$), distribution. Parameters $(p_1, \mu_1, p_2, \mu_2)$ such that $\mathbb{E}[U_1] \in \{(1/3)^{1/3}, 3^{1/3}\}$, $\mathrm{var}(U_1) = \rho_1$ are found in Appendix B.4 of \cite{post2024built}. 

The evolution of $\theta$ over time is presented in \cref{fig:effect_hetero} (left) for $U_1 \sim \mathrm{BHN}(0.7, 0.3, 0.05, 5.10)$ ($\mathbb{E}[U_1] = (1/3)^{1/3}$) and $U_1 \sim \mathrm{BHN}(0.05, 0.5, 0.18, 3.53)$ ($\mathbb{E}[U_1] = 3^{1/3}$). The numerical values of the quantiles of $T^{1}$ can be found in \cref{fig:sc_fig2} (left). Since $U_1$ is independent of $U_0$, the  individuals with the highest (harming) value of $U_1$, i.e. $\mu_2$, will contribute more to the low quantiles of $T^1$, while those with lower (beneficial) values of $U_1$, i.e. $\mu_1$, will contribute more to the larger quantiles of $T^1$. Thus, for low quantiles, the acceleration factor will be closer to $\mu_2$, while for larger quantiles, the acceleration factor will be closer to $\mu_1$. As a reminder, $Q_{T^0} (p) = Q_{T^1}(p) \theta(Q_{T^1}(p))$, thus $\theta(Q_{T^1}(p))>1$ implies that the $p$-th quantile of $T^0$ is larger than that of $T^{1}$ so that when no one is treated a higher proportion of the population survives $Q_{T^1}(p)$ than when everyone would be treated. Similarly, $\theta(Q_{T^1}(p))<1$ implies that when no one is treated, a smaller proportion of the population survives $Q_{T^1}(p)$ compared to when everyone would be treated. In the case of $\mathbb{E}[U_1] = (1/3)^{1/3}$, $\theta$ is greater than $1$ up until approximately the $0.1$ quantile, hence $S_{T^0}$ dominates $S_{T^1}$ in this area. At $\theta(t) = 1$ the curves crosses, such that $S_{T^1}$ starts dominating $S_{T^1}$ in the remaining quantile range (cf. \cref{fig:sc_fig2} (left)). Consequently, the proportion of individuals that survive $Q_{T^1}(p)$ is larger when treated for approximately $p > 0.1$. Analogously, for $\mathbb{E}[U_1] = 3^{1/3}$, $\theta$ slightly dips below $1$ shortly after the $0.8$ quantile, hence $S_{T^0}$ dominates $S_{T^1}$ in the majority of the quantile range and the proportion of individuals that survives $Q_{T^1}(p)$ is smaller when treated for $p < 0.8$ (cf. \cref{fig:sc_fig2} (left)).  For reference, the estimand $\mathbb{E}[U_1]$ is included in \cref{fig:effect_hetero}, which in the case of effect homogeneity ($\rho_1 = 0$) equals the time-invariant marginal causal effects $\theta$, cf. \cref{eq:est2}.

\begin{figure}[htbp]
\centering
\include{effect_hetero}
\caption{$\theta$ when $T^1 = T^0 / U_1$, $\lambda^a_{i}(t) = (t^2/20) U_{0i} \exp (\beta a), U_0 \sim \Gamma(1,1)$ and $U_1$ follows a BHN distribution with $\rho_1 = 1$, $\mathbb{E}[U_1] = 3^{1/3}$ (($p_1$, $\mu_1$, $p_2$, $\mu_2$) = $(0.05, 0.5, 0.18, 3.53)$) (green) and $\mathbb{E}[U_1] = (1/3)^{1/3}$ (($p_1$, $\mu_1$, $p_2$, $\mu_2$) = $(0.7, 0.3, 0.05, 5.10)$) (orange) (left); when $T^0  \sim \mathrm{Weibull}(\Lambda,2), \Lambda \sim X / \Gamma(1 + 1/2), X $ categorical ($\mathbb{P}(X = 1 ) = \mathbb{P}(X=10) = 0.5 $) and $U_1$ as specified for left hand side (right). See associated $S_{T^0}$ and $S_{T^1}$ in \cref{fig:sc_fig2}. When $\theta(Q_{T^1}(p))>1$, the $p$-th quantile of $T^1$ is smaller than that of $T^0$ and $\theta(Q_{T^1}(p))<1$ the opposite holds.}
\label{fig:effect_hetero}
\end{figure}

To demonstrate the dependence of $\theta$ on $S_{T^0}$, we also present an example where $T^0$ is Weibull mixture distributed, $\mathrm{Weibull}(\Lambda,2), \Lambda \sim X / \Gamma(1 + 1/2), X $ categorical ($\mathbb{P}(X = 1 ) = \mathbb{P}(X=10) = 0.5 $), $U_1$ as defined above and again $U_1 \indep T^0$. Here $T^1$ is a finite mixture, in particular, 
$S_{T^1} (t) = \sum_{i=1}^{3} p_i S_{T^0} (t \mu_i)$ for $S_{T^0} (t) = 0.5 S_{T'_0} (t) + 0.5 S_{T''_0} (t)$, $T'_0 \sim \mathrm{Weibull}(1/\Gamma(1+1/2), 2), T''_0 \sim \mathrm{Weibull}(10/\Gamma(1+1/2), 2)$. The resulting $\theta$ is displayed in \cref{fig:effect_hetero} (right), and the times corresponding to the quantiles of $T^{1}$ can be found in \cref{fig:sc_fig2} (right). To help explain the behaviour of $\theta$, we will refer to \cref{fig:sc_fig2.1} and \cref{fig:sc_fig2.2} which displays the survival functions and densities of the components in the mixture.

As seen in \cref{fig:sc_fig2.2}, initially individuals which are harmed by treatment ($\mu_2 = 5.10$ (orange), $3.53$ (green)) experience the event, then the first mode of unaffected individuals experience the event before the first mode of individuals which benefits from treatment experience the event ($\mu_1 = 0.3$ (orange), $0.5$ (green)). Consequently, the acceleration factors decreases monotonically up until approximately quantile $0.55$ ($t= 6.45$ (orange), $2.5$ (green)). Thereafter, the second mode of  unaffected individuals forces the acceleration factors to increase. As the group which benefits reaches its second mode and the group of unaffected individuals becomes smaller relative to the group which benefits, the acceleration factor once again decreases. 

Finally, we consider an example with a continuous (Gamma distributed) $U_1$ to assess the effect of the variability of $U_1$ on $\theta$. In \cref{fig:effect_hetero2}, it is demonstrated that greater variance yields an increasingly heterogeneous relationship between the quantiles. The times corresponding to the quantiles of $T^{1}$ can now be found in \cref{fig:sc_fig3}.

\begin{figure}[htbp]
\centering
\include{effecthetero-gammaU1}
\caption{$\theta$ when $T^1 = T^0 / U_1$, $T^0$ as specified in \cref{fig:effect_hetero}, $U_1$ follows a Gamma distribution with $\mathrm{Var}(U_1) = 0.5,1,2$ (dashed, solid, dotted), $\mathbb{E}[U_1] = 3^{1/3}$ (green) and  $\mathbb{E}[U_1] = (1/3)^{1/3}$ (orange). See associated $S_{T^0}$, $S_{T^1}$ in \cref{fig:sc_fig3}.}
\label{fig:effect_hetero2}
\end{figure}  

Consequently, it is demonstrated that time-invariant individual causal effects can result in time-varying marginal causal effects when effect heterogeneity is present. Thus, a homogeneous time-varying causal effect cannot be distinguished from a time-invariant but heterogeneous causal effect. Across all examples presented in this section, the expectation of the conditional causal effect $\mathbb{E}[U_1]$ is fixed and equal to either $(1/3)^{1/3}$ or $3^{1/3}$, yet quite different behaviours of the related marginal causal estimands $\theta$ are observed. In the case of a time-varying marginal acceleration factor, the simple interpretation of the acceleration factor as a contrast of expected survival times, e.g., cf. \cref{eq:est2}, does no longer apply.  
\newpage

\subsubsection{Case study: Federation Francophone de Cancerologie Digestive Group Study 9803}\label{S:casestudy}
\cite{burzykowski2022semi} reviewed the properties of and estimation methods for the AFT model and presented a simulation study to discuss the applicability of the AFT model as an alternative to the proportional hazard model in the context of cancer clinical trials. As a practical example, time-invariant semi-parametric AFT models were fitted to the progression-free survival times of 20 trials in advanced gastric cancer (the data is publicly available as supplementary materials in \cite{buyse2016statistical}). To verify the appropriateness of the model, and thus the time-invariant effect, the survival function of the residuals $\log T - \log \theta_{m} {\cdot}  A$ were compared. A deviation was observed for the trial conducted by \cite{Bouche2004} with $135$ participants. 

We will fit a time-varying AFT model to this data to estimate $\theta_m$. Since the sample size is small, we cannot resort to a non-parametric estimator. Instead, we apply the flexible parametric method proposed in \cite{crowther2023flexible} as implemented in the \texttt{aft()} function from the \texttt{R} package \texttt{rstpm2}. Using a cubic spline with $3$ degrees of freedom for $\log(-\log S_{T\mid A=1})$ and a cubic spline with two pre-specified knots for $\log \theta_m$, the model fits quite well as illustrated in \cref{fig:case_study_validation} in \cref{supp-fig}. The default output of the $\texttt{aft()}$ function is in terms of $\eta(t)$ (cf. \cref{def:eta}), but we have written some additional code to output the estimate of $\theta_m$, and its uncertainty. The code for this case study can be found at \url{https://github.com/marbrath/causal_AFT}. The estimated $\theta_m$ is presented in \cref{fig:case_study}.

\begin{figure}[htbp]
\centering
\begin{subfigure}{0.5\textwidth}
  \centering
\begin{tikzpicture}[x=1pt,y=1pt]
\definecolor{fillColor}{RGB}{255,255,255}
\path[use as bounding box,fill=fillColor,fill opacity=0.00] (0,30) rectangle (267.40,237.40);
\begin{scope}
\path[clip] ( 49.20, 61.20) rectangle (242.20,218.20);
\definecolor{drawColor}{RGB}{0,0,0}

\path[draw=drawColor,line width= 0.8pt,line join=round,line cap=round] (235.05, 99.24) --
	(225.12, 98.73) --
	(215.20, 98.70) --
	(205.27, 98.93) --
	(195.34, 99.34) --
	(185.41, 99.91) --
	(175.48,100.63) --
	(165.56,101.51) --
	(155.63,102.56) --
	(145.70,103.81) --
	(135.77,105.30) --
	(125.84,107.06) --
	(115.92,109.20) --
	(105.99,111.83) --
	( 96.06,115.11) --
	( 86.13,119.17) --
	( 76.20,124.13) --
	( 66.28,130.42) --
	( 56.35,139.61);
\end{scope}
\begin{scope}
\path[clip] (  0.00,  0.00) rectangle (267.40,267.40);
\definecolor{drawColor}{RGB}{0,0,0}

\path[draw=drawColor,line width= 0.4pt,line join=round,line cap=round] ( 86.13, 61.20) -- (205.27, 61.20);

\path[draw=drawColor,line width= 0.4pt,line join=round,line cap=round] ( 86.13, 61.20) -- ( 86.13, 55.20);

\path[draw=drawColor,line width= 0.4pt,line join=round,line cap=round] (125.84, 61.20) -- (125.84, 55.20);

\path[draw=drawColor,line width= 0.4pt,line join=round,line cap=round] (165.56, 61.20) -- (165.56, 55.20);

\path[draw=drawColor,line width= 0.4pt,line join=round,line cap=round] (205.27, 61.20) -- (205.27, 55.20);

\node[text=drawColor,anchor=base,inner sep=0pt, outer sep=0pt, scale=  1.00] at ( 86.13, 39.60) {0.2};

\node[text=drawColor,anchor=base,inner sep=0pt, outer sep=0pt, scale=  1.00] at (125.84, 39.60) {0.4};

\node[text=drawColor,anchor=base,inner sep=0pt, outer sep=0pt, scale=  1.00] at (165.56, 39.60) {0.6};

\node[text=drawColor,anchor=base,inner sep=0pt, outer sep=0pt, scale=  1.00] at (205.27, 39.60) {0.8};

\path[draw=drawColor,line width= 0.4pt,line join=round,line cap=round] ( 49.20, 67.01) -- ( 49.20,196.23);

\path[draw=drawColor,line width= 0.4pt,line join=round,line cap=round] ( 49.20, 67.01) -- ( 43.20, 67.01);

\path[draw=drawColor,line width= 0.4pt,line join=round,line cap=round] ( 49.20, 99.32) -- ( 43.20, 99.32);

\path[draw=drawColor,line width= 0.4pt,line join=round,line cap=round] ( 49.20,131.62) -- ( 43.20,131.62);

\path[draw=drawColor,line width= 0.4pt,line join=round,line cap=round] ( 49.20,163.93) -- ( 43.20,163.93);

\path[draw=drawColor,line width= 0.4pt,line join=round,line cap=round] ( 49.20,196.23) -- ( 43.20,196.23);

\node[text=drawColor,rotate= 90.00,anchor=base,inner sep=0pt, outer sep=0pt, scale=  1.00] at ( 34.80, 67.01) {0.4};

\node[text=drawColor,rotate= 90.00,anchor=base,inner sep=0pt, outer sep=0pt, scale=  1.00] at ( 34.80, 99.32) {0.6};

\node[text=drawColor,rotate= 90.00,anchor=base,inner sep=0pt, outer sep=0pt, scale=  1.00] at ( 34.80,131.62) {0.8};

\node[text=drawColor,rotate= 90.00,anchor=base,inner sep=0pt, outer sep=0pt, scale=  1.00] at ( 34.80,163.93) {1.0};

\node[text=drawColor,rotate= 90.00,anchor=base,inner sep=0pt, outer sep=0pt, scale=  1.00] at ( 34.80,196.23) {1.2};

\path[draw=drawColor,line width= 0.4pt,line join=round,line cap=round] ( 49.20, 61.20) --
	(242.20, 61.20) --
	(242.20,218.20) --
	( 49.20,218.20) --
	cycle;
\end{scope}
\begin{scope}
\path[clip] (  0.00,  0.00) rectangle (267.40,267.40);
\definecolor{drawColor}{RGB}{0,0,0}

\node[text=drawColor,anchor=base,inner sep=0pt, outer sep=0pt, scale=  1.00] at (145.70, 15.60) {$F_{T^1}(t)$};

\node[text=drawColor,rotate= 90.00,anchor=base,inner sep=0pt, outer sep=0pt, scale=  1.00] at ( 10.80,139.70) {$\theta_m(t)$};
\end{scope}
\begin{scope}
\path[clip] ( 49.20, 61.20) rectangle (242.20,218.20);
\definecolor{drawColor}{RGB}{0,0,0}

\path[draw=drawColor,line width= 0.4pt,dash pattern=on 4pt off 4pt ,line join=round,line cap=round] (235.05, 72.51) --
	(225.12, 75.05) --
	(215.20, 76.01) --
	(205.27, 76.38) --
	(195.34, 76.53) --
	(185.41, 76.62) --
	(175.48, 76.75) --
	(165.56, 76.96) --
	(155.63, 77.28) --
	(145.70, 77.70) --
	(135.77, 78.22) --
	(125.84, 78.82) --
	(115.92, 79.45) --
	(105.99, 80.02) --
	( 96.06, 80.39) --
	( 86.13, 80.49) --
	( 76.20, 80.62) --
	( 66.28, 81.75) --
	( 56.35, 87.14);

\path[draw=drawColor,line width= 0.4pt,dash pattern=on 4pt off 4pt ,line join=round,line cap=round] (235.05,136.17) --
	(225.12,130.12) --
	(215.20,128.39) --
	(205.27,128.35) --
	(195.34,129.17) --
	(185.41,130.50) --
	(175.48,132.18) --
	(165.56,134.14) --
	(155.63,136.39) --
	(145.70,138.98) --
	(135.77,142.03) --
	(125.84,145.74) --
	(115.92,150.44) --
	(105.99,156.68) --
	( 96.06,165.30) --
	( 86.13,177.02) --
	( 76.20,191.84) --
	( 66.28,208.95) --
	( 56.35,224.56);
\definecolor{drawColor}{RGB}{147,112,219}

\path[draw=drawColor,line width= 0.4pt,line join=round,line cap=round] (235.05, 90.22) --
	(225.12, 93.62) --
	(215.20, 96.01) --
	(205.27, 97.91) --
	(195.34, 99.52) --
	(185.41,100.95) --
	(175.48,102.24) --
	(165.56,103.45) --
	(155.63,104.62) --
	(145.70,105.77) --
	(135.77,106.96) --
	(125.84,108.26) --
	(115.92,109.73) --
	(105.99,111.40) --
	( 96.06,113.24) --
	( 86.13,115.02) --
	( 76.20,116.35) --
	( 66.28,117.13) --
	( 56.35,117.72);
\definecolor{drawColor}{RGB}{220,220,220}

\path[draw=drawColor,draw opacity=0.20,line width= 0.8pt,line join=round,line cap=round] ( 49.20,105.46) -- (242.20,105.46);
\end{scope}
\end{tikzpicture}
\end{subfigure}%
\begin{subfigure}{.5\textwidth}
  \centering
\begin{tikzpicture}[x=1pt,y=1pt]
\definecolor{fillColor}{RGB}{255,255,255}
\path[use as bounding box,fill=fillColor,fill opacity=0.00] (0,30) rectangle (267.40,237.40);
\begin{scope}
\path[clip] ( 49.20, 61.20) rectangle (242.20,218.20);
\definecolor{drawColor}{RGB}{0,0,0}

\path[draw=drawColor,line width= 0.8pt,line join=round,line cap=round] (235.05, 99.24) --
	(200.85, 98.73) --
	(179.76, 98.70) --
	(164.21, 98.93) --
	(151.74, 99.34) --
	(141.23, 99.91) --
	(132.10,100.63) --
	(123.96,101.51) --
	(116.57,102.56) --
	(109.77,103.81) --
	(103.42,105.30) --
	( 97.42,107.06) --
	( 91.68,109.20) --
	( 86.12,111.83) --
	( 80.64,115.11) --
	( 75.14,119.17) --
	( 69.48,124.13) --
	( 63.41,130.42) --
	( 56.35,139.61);
\end{scope}
\begin{scope}
\path[clip] (  0.00,  0.00) rectangle (267.40,267.40);
\definecolor{drawColor}{RGB}{0,0,0}

\path[draw=drawColor,line width= 0.4pt,line join=round,line cap=round] ( 82.62, 61.20) -- (204.17, 61.20);

\path[draw=drawColor,line width= 0.4pt,line join=round,line cap=round] ( 82.62, 61.20) -- ( 82.62, 55.20);

\path[draw=drawColor,line width= 0.4pt,line join=round,line cap=round] (123.13, 61.20) -- (123.13, 55.20);

\path[draw=drawColor,line width= 0.4pt,line join=round,line cap=round] (163.65, 61.20) -- (163.65, 55.20);

\path[draw=drawColor,line width= 0.4pt,line join=round,line cap=round] (204.17, 61.20) -- (204.17, 55.20);

\node[text=drawColor,anchor=base,inner sep=0pt, outer sep=0pt, scale=  1.00] at ( 82.62, 39.60) {100};

\node[text=drawColor,anchor=base,inner sep=0pt, outer sep=0pt, scale=  1.00] at (123.13, 39.60) {200};

\node[text=drawColor,anchor=base,inner sep=0pt, outer sep=0pt, scale=  1.00] at (163.65, 39.60) {300};

\node[text=drawColor,anchor=base,inner sep=0pt, outer sep=0pt, scale=  1.00] at (204.17, 39.60) {400};

\path[draw=drawColor,line width= 0.4pt,line join=round,line cap=round] ( 49.20, 67.01) -- ( 49.20,196.23);

\path[draw=drawColor,line width= 0.4pt,line join=round,line cap=round] ( 49.20, 67.01) -- ( 43.20, 67.01);

\path[draw=drawColor,line width= 0.4pt,line join=round,line cap=round] ( 49.20, 99.32) -- ( 43.20, 99.32);

\path[draw=drawColor,line width= 0.4pt,line join=round,line cap=round] ( 49.20,131.62) -- ( 43.20,131.62);

\path[draw=drawColor,line width= 0.4pt,line join=round,line cap=round] ( 49.20,163.93) -- ( 43.20,163.93);

\path[draw=drawColor,line width= 0.4pt,line join=round,line cap=round] ( 49.20,196.23) -- ( 43.20,196.23);

\node[text=drawColor,rotate= 90.00,anchor=base,inner sep=0pt, outer sep=0pt, scale=  1.00] at ( 34.80, 67.01) {0.4};

\node[text=drawColor,rotate= 90.00,anchor=base,inner sep=0pt, outer sep=0pt, scale=  1.00] at ( 34.80, 99.32) {0.6};

\node[text=drawColor,rotate= 90.00,anchor=base,inner sep=0pt, outer sep=0pt, scale=  1.00] at ( 34.80,131.62) {0.8};

\node[text=drawColor,rotate= 90.00,anchor=base,inner sep=0pt, outer sep=0pt, scale=  1.00] at ( 34.80,163.93) {1.0};

\node[text=drawColor,rotate= 90.00,anchor=base,inner sep=0pt, outer sep=0pt, scale=  1.00] at ( 34.80,196.23) {1.2};

\path[draw=drawColor,line width= 0.4pt,line join=round,line cap=round] ( 49.20, 61.20) --
	(242.20, 61.20) --
	(242.20,218.20) --
	( 49.20,218.20) --
	cycle;
\end{scope}
\begin{scope}
\path[clip] (  0.00,  0.00) rectangle (267.40,267.40);
\definecolor{drawColor}{RGB}{0,0,0}

\node[text=drawColor,anchor=base,inner sep=0pt, outer sep=0pt, scale=  1.00] at (145.70, 15.60) {$t$};

\end{scope}
\begin{scope}
\path[clip] ( 49.20, 61.20) rectangle (242.20,218.20);
\definecolor{drawColor}{RGB}{0,0,0}

\path[draw=drawColor,line width= 0.4pt,dash pattern=on 4pt off 4pt ,line join=round,line cap=round] (235.05, 72.51) --
	(200.85, 75.05) --
	(179.76, 76.01) --
	(164.21, 76.38) --
	(151.74, 76.53) --
	(141.23, 76.62) --
	(132.10, 76.75) --
	(123.96, 76.96) --
	(116.57, 77.28) --
	(109.77, 77.70) --
	(103.42, 78.22) --
	( 97.42, 78.82) --
	( 91.68, 79.45) --
	( 86.12, 80.02) --
	( 80.64, 80.39) --
	( 75.14, 80.49) --
	( 69.48, 80.62) --
	( 63.41, 81.75) --
	( 56.35, 87.14);

\path[draw=drawColor,line width= 0.4pt,dash pattern=on 4pt off 4pt ,line join=round,line cap=round] (235.05,136.17) --
	(200.85,130.12) --
	(179.76,128.39) --
	(164.21,128.35) --
	(151.74,129.17) --
	(141.23,130.50) --
	(132.10,132.18) --
	(123.96,134.14) --
	(116.57,136.39) --
	(109.77,138.98) --
	(103.42,142.03) --
	( 97.42,145.74) --
	( 91.68,150.44) --
	( 86.12,156.68) --
	( 80.64,165.30) --
	( 75.14,177.02) --
	( 69.48,191.84) --
	( 63.41,208.95) --
	( 56.35,224.56);
\definecolor{drawColor}{RGB}{147,112,219}

\path[draw=drawColor,line width= 0.4pt,line join=round,line cap=round] (254.88, 90.22) --
	(209.74, 93.62) --
	(183.72, 96.01) --
	(165.51, 97.91) --
	(151.53, 99.52) --
	(140.19,100.95) --
	(130.64,102.24) --
	(122.38,103.45) --
	(115.07,104.62) --
	(108.48,105.77) --
	(102.44,106.96) --
	( 96.80,108.26) --
	( 91.44,109.73) --
	( 86.29,111.40) --
	( 81.29,113.24) --
	( 76.36,115.02) --
	( 71.35,116.35) --
	( 65.88,117.13) --
	( 59.05,117.72);
\definecolor{drawColor}{RGB}{220,220,220}

\path[draw=drawColor,draw opacity=0.20,line width= 0.8pt,line join=round,line cap=round] ( 49.20,105.46) -- (242.20,105.46);
\end{scope}
\end{tikzpicture}
\end{subfigure}
\caption{Estimated $\theta_m$ (solid black) and corresponding $95$\% confidence intervals (dashed black) on the quantile scale (left) and time scale (right). Furthermore, $\theta$ for $S_{T^0}(t) = S_{T | A = 0}(t)$ and  $S_{T^{1}}(t) = 0.5 S_{T^{0}}(0.9t) + 0.5 S_{T^{0}}(0.45t)$ is presented (purple).}
\label{fig:case_study}
\end{figure}

Due to the small sample size, the $\theta_m$ curve suffers from serious statistical uncertainty so that a constant acceleration factor cannot be ruled out. In the remaining discussion, we will ignore the statistical uncertainty. One might conclude that there is a time-varying acceleration factor such that the treatment becomes more beneficial in time, i.e. the $F_{T|A=1}(t)$ quantile of $T|A=0$ equals $t\,\theta_m(t)$ and thus decreases relative to $t$ over time. As explained in this paper, one can not distinguish such a time-varying causal effect from a time-invariant heterogeneous causal effect. 

Interestingly, for the meta-analysis conducted by \cite{Gastric2013}, two treatment arms were merged. These arms contained individuals treated additionally with Cisplatin or Ironotecan (\cite{Bouche2004}). In the case that these two treatment regimes have different (but homogeneous) effects, there is treatment effect heterogeneity in the merged group. The presence of multiple versions of treatment violates causal consistency, but Theorem \eqref{prop:1} also applies under a weaker form of causal consistency so that $Y|A=a$ is equal in distribution to $Y^{a}$. The heterogeneity will result in a time-varying acceleration factor. For example, assume the acceleration factor for Cisplatin is $0.9$ and for Ironotecan is $0.45$, then the survival function for receiving one of these treatments with probability $0.5$ equals $S_{T^{1}}(t) = 0.5 S_{T^{0}}(0.9t) + 0.5 S_{T^{0}}(0.45t)$. When $S_{T^{0}}$ equals the $S_{T\mid A=0}$ distribution fitted before, the resulting $\theta$ is presented with the purple line in \cref{fig:case_study}. These time-invariant but heterogeneous effects could quite well explain the estimated time-varying acceleration factor.

\section{Confounding}\label{S:Confounding}
So far, we have supposed the absence of confounding. However, the presented results also hold for observational data when all confounders $L$ are observed by conditioning on $L$. In particular, when $T^a \indep A | L$, then the conditional (on $L$) observed acceleration factor $\frac{1}{t} S_{T |A=0, L = \ell}^{-}(S_{T | A=1, L = \ell}(t)),$ identifies the conditional causal acceleration factor,  
\begin{align*}
      \theta_{L=\ell}(t) &\coloneqq \frac{1}{t} S_{T^0 | L = \ell}^{-}(S_{T^1 | L = \ell}(t)),
\end{align*}
cf. \cref{prop:1}. In turn, when the positivity assumption $(\forall \ell{:}~ 0 < \mathbb{P}(A~{=}~1  \mid L{=}\ell) < 1)$ applies (\cite[Chapter 3]{hernan2020}), the causal acceleration factor $\theta$ (\cref{def:caf}) is equal to
\begin{align*}  \theta_\mathrm{\, adj} (t) \coloneqq \frac{1}{t} S^{-}_{0, \mathrm{\, adj}} \left( S_{1, \mathrm{\, adj}} (t) \right),
\end{align*}
where $S_{a, \mathrm{\, adj}}(t)  = \int S_{T| L = \ell, A=a} \,(t) d F_L(\ell) $. 


In the absence of effect heterogeneity, the conditional causal acceleration factor $\theta_L$ does not vary with $L$, and thus coincides with $\theta_{\mathrm{adj}}$ and $\theta$. In systems with effect heterogeneity, however, $\theta_L$ will generally vary across strata of $L$. Either because the distribution of $U_{1}$ depends on $L$, or because the distribution of $U_{0}$ depends on $L$, in which case the distribution of $T^0$ also depends on $L$. The latter parallels the behavior illustrated in \cref{fig:effect_hetero}, where $\theta$ depends on the distribution of $T^0$ in the presence of $U_1$: likewise, the distributions $T^0 \mid L=\ell$ yield different $\theta_L$ values when effect heterogeneity is present. Consequently, $\theta_L$ deviates from $\theta$ whenever effect heterogeneity exists.


To illustrate both the deviation of $\theta_m$ from $\theta$ due to confounding and the deviation of $\theta_L$ from $\theta$ in presence of heterogeneity, we extend the settings from \cref{table:1} and \cref{fig:effect_hetero2} by introducing a measured confounder $L$. Specifically, the setup of \cref{table:1} illustrates the homogenous-effect setting. Here $T^1 = T^0 / 3^{1/3}$, $T^0$ follows a Gamma–Weibull distribution ($\rho_0 = 1$), and we add a Bernoulli distributed ($p=0.5$) confounder $L$ that causes $A$ and is associated with $U_0$. The setup of \cref{fig:effect_hetero2} demonstrates the heterogenous-effect setting. Here $T^1 = T^0 / U_1$, $T^0$ follows a Gamma–Weibull distribution ($\rho_0 = 1$), and $U_1$ follows a Gamma distribution with $\mathrm{Var}(U_1) = 1$ and $\mathbb{E}[U_1] = 3^{1/3}$, and we add a Bernoulli distributed ($p=0.5$) confounder $L$ that causes $A$ and is associated with either $U_0$ or $U_1$. 

The causal structures for the extended examples are presented in the single-world intervention graphs in \cref{fig:SWIGconf}. The homogenous setting is presented in \cref{swig:label1} and the heterogenous setting is presented in \cref{swig:label2} with the additional restriction that $U_{0} \indep U_{1}$ since this holds in our example (so either $(i)$ $U_0 \rightarrow  L \leftarrow U_1$, $(ii)$ $L \indep U_1$ and $U_0 \rightarrow L$ or $L \rightarrow U_0$ or $(iii)$ $L \indep U_0$ and $U_1 \rightarrow L$ or $L \rightarrow U_1$). In the SCMs (\ref{SCAFT-effect-homo}) and (\ref{SCAFT_lin_rel}), randomness in $T^a$ (apart from $N_T$) comes from $U_0$ or $U_1$, so dependence of $L$ and $T^a$ must be related to $U_0$ or $U_1$. 

\begin{figure}[htbp]
\centering
\definecolor{steelblue}{HTML}{4682b4}
\begin{subfigure}[b]{0.45\textwidth}
\centering
\resizebox{0.8\textwidth}{!}{
\begin{tikzpicture}
\tikzset{line width=1.5pt, outer sep=0pt,
ell/.style={draw,fill=white, inner sep=2pt,
line width=1.5pt},
swig vsplit={gap=5pt,
inner line width right=0.5pt}};
\tikzset{swig vsplit={gap=3pt, line color right=steelblue}}
\node[name=A,shape=swig vsplit]{
\nodepart{left}{$A$}
\nodepart[text=steelblue]{right}{$a$}};
\node[name=Ta, right=20mm of A, ell, shape=ellipse, minimum width=24pt, draw=steelblue, text=steelblue]{$T^a$};
\node[name=U0, above=10mm of Ta, xshift=-6mm, ell, shape=ellipse]{$U_0$};
\node[name=L, above=15mm of A, ell, shape=ellipse, minimum size=20pt]{$L$};

\draw[->,line width=1.5pt,>=stealth, draw=steelblue, fill=steelblue]
    (A) edge (Ta)
    ;
\draw[->,line width=1.5pt,>=stealth, bend left, looseness=0.5]
    (U0) edge (Ta)
    ;
\draw[<-,line width=1.5pt,>=stealth, bend left, looseness=0.8]
    (A) edge (L)
    ;
\draw[-,line width=1.5pt,>=stealth, dashed, bend left, looseness=0.3]
    (L) edge (U0)
    ;
\end{tikzpicture}
}
\caption{\label{swig:label1}}
\end{subfigure}
\begin{subfigure}[b]{0.45\textwidth}
\centering
\resizebox{0.8\textwidth}{!}{
\begin{tikzpicture}
\tikzset{line width=1.5pt, outer sep=0pt,
ell/.style={draw,fill=white, inner sep=2pt,
line width=1.5pt},
swig vsplit={gap=5pt,
inner line width right=0.5pt}};
\tikzset{swig vsplit={gap=3pt, line color right=steelblue}}
\node[name=A,shape=swig vsplit]{
\nodepart{left}{$A$}
\nodepart[text=steelblue]{right}{$a$}};
\node[name=Ta, right=20mm of A, ell, shape=ellipse, minimum width=24pt, draw=steelblue, text=steelblue]{$T^a$};
\node[name=U0, above=10mm of Ta, xshift=-6mm, ell, shape=ellipse]{$U_0$};
\node[name=U1, right=4mm of Ta, above=10mm of Ta, xshift=5mm, ell, shape=ellipse]{$U_1$};
\node[name=L, above=15mm of A, ell, shape=ellipse, minimum size=20pt]{$L$};

\draw[->,line width=1.5pt,>=stealth, draw=steelblue, fill=steelblue]
    (A) edge (Ta)
    ;
\draw[->,line width=1.5pt,>=stealth, bend right, looseness=0.6]
    (U0) edge (Ta)
    ;
\draw[->,line width=1.5pt,>=stealth, bend left, looseness=0.6]
    (U1) edge (Ta)
    ;
\draw[<-,line width=1.5pt,>=stealth, bend left, looseness=0.8]
    (A) edge (L)
    ;
\draw[-,line width=1.5pt,>=stealth, dashed, bend left, looseness=0.6]
    (L) edge (U1)
    ;
\draw[-,line width=1.5pt,>=stealth, dashed, bend left, looseness=0.2]
    (L) edge (U0)
    ;
\end{tikzpicture}
}
\caption{\label{swig:label2}}
\end{subfigure}
\caption{SWIGs for SCM (\ref{SCAFT-effect-homo}) and SCM (\ref{SCAFT_lin_rel}) extended with a (potential) confounder $L$. In \cref{swig:label1} the undirected dashed line indicate the presence or absence of arrows in either directions. In \cref{swig:label2} the undirected dashed lines indicate the presence or absence of arrows in either directions under the restriction $U_0 \indep U_1$, i.e. the three possible scenarios: ($i$) $U_0 \rightarrow L \leftarrow U_1$, ($ii$) $L \indep U_1$ and $U_0 \rightarrow L$ or $L \rightarrow U_0$ and ($iii$) $L \indep U_0$ and $U_1 \rightarrow L$ or $L \rightarrow U_1$.}
\label{fig:SWIGconf}
\end{figure}
The marginals of $(L, U_{0}, U_1)$ are generated using a Gaussian copula so that Kendall's $\tau$ correlation of $L$ with $U_{0}$ and $U_{1}$ equal $\tau_{0}$ and $\tau_{1}$ respectively. Given $L=\ell$, $\mathbb{P}(A=1\mid L=\ell)= 0.5 + \beta_{\text{LA}} (2\ell -1)$, so that the Kendall's $\tau$ for $L$ and $A$ equals $2\beta_{\text{LA}}$. The code for this simulation can be found at \url{https://github.com/marbrath/causal_AFT}.  

For the homogenous-effect setting of \cref{swig:label1}, the quantities $\theta_m$, $\theta_{\mathrm{adj}}$, and $\theta_L$ were obtained empirically from simulations with $n_{\mathrm{obs}} = 10^6$ individuals, and are presented for $\tau_0 \in \{0, 0.5\}$ in \cref{fig:conf-homo}, where also $\theta$ is shown. When there is no confounding ($\tau_0 = 0$), all estimands coincide. In contrast, when confounding is present ($\tau_0 = 0.5$), $\theta_m$ deviates from $\theta$, whereas both $\theta_{\mathrm{adj}}$ and the stratum-specific effects $\theta_{L}$ identifies $\theta$. The magnitude of the deviation of $\theta_m$ increases with larger $\beta_{LA}$ as illustrated in \cref{fig:confounders_homo}.

\begin{figure}[htbp]
\centering
\includegraphics[width=0.8\linewidth]{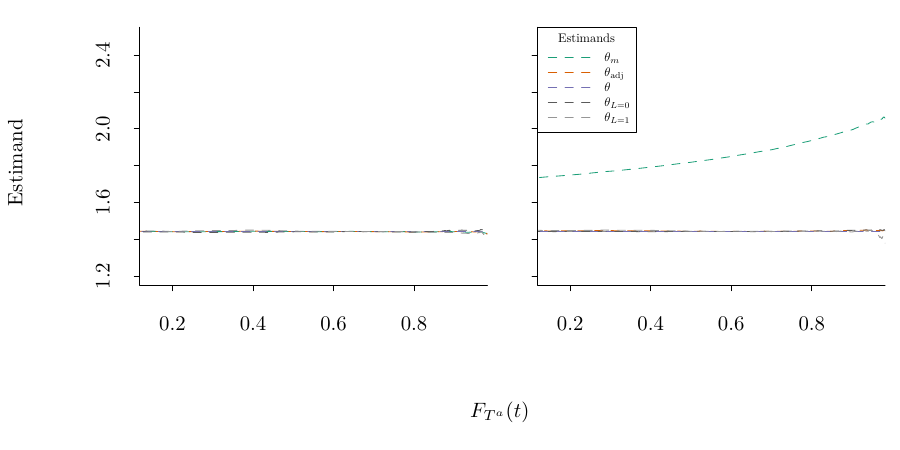}
\caption{$\theta_{\mathrm{adj}}$ (orange), $\theta_m$ (green), $\theta$ (purple), $\theta_{L=0}$ (dark gray), and $\theta_{L=1}$ (light gray) for the setting in \cref{swig:label1}. Here, $T^1 = T^0 / 3^{1/3}$, $T^0$ follows a Gamma–Weibull distribution ($\rho_0 = 1$). Furthermore, $L \sim \mathrm{Bernoulli}(0.5)$, $\beta_{LA} = 0.25$, with $\tau_0 = 0$ (left) and $\tau_0 = 0.5$ (right). See the associated $S_{T^0}$ and $S_{T^a}$ in \cref{fig:sc_tbl1}.}
\label{fig:conf-homo}
\end{figure}

For the heterogenous setting of \cref{swig:label2}, the quantities $\theta_m$, $\theta_{\mathrm{adj}}$, and $\theta_L$ were obtained empirically from simulations with $n_{\mathrm{obs}} = 10^6$ individuals, and are presented for $\tau_0, \tau_1 \in \{0,0.5\}$ in \cref{fig:conf-hetero}, where also $\theta$ is shown. In contrast to \cref{fig:conf-homo}, the stratum effects $\theta_L$ differ from $\theta$, which is only identified via $\theta_{\mathrm{adj}}$. For the example considered here, confounding resulting from a relation of $L$ with $U_{0}$ results in a larger deviation of $\theta_{m}$ from $\theta$ than due to a relation with $U_{1}$. When both relations are present, the deviation is even larger. The larger $\beta_{\text{LA}}$, the larger the deviation of $\theta_{m}$ from $\theta$ as illustrated in \cref{fig:confounders_X}-\cref{fig:confounders_XXX} in \cref{supp-fig}.

\begin{figure}[htbp]
\centering
\includegraphics[width=\linewidth]{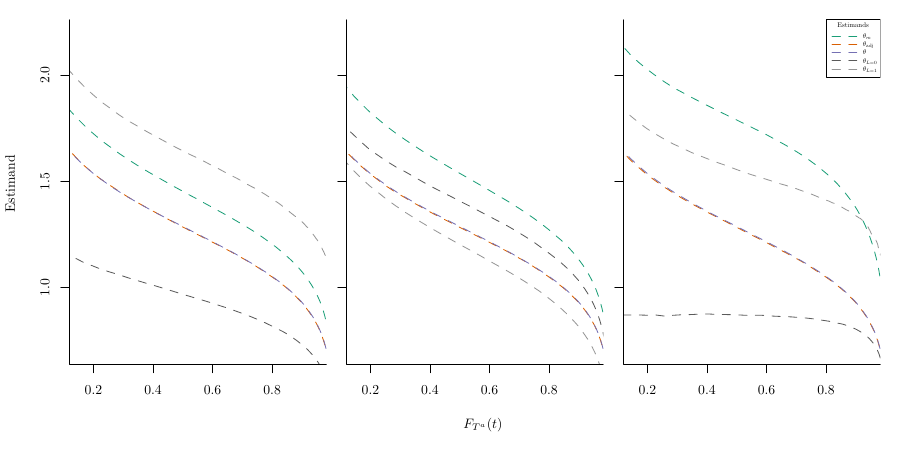}
\caption{$\theta_{\mathrm{adj}}$ (orange), $\theta_m$ (green), $\theta$ (purple), $\theta_{L=0}$ (dark gray), and $\theta_{L=1}$ (light gray) for the setting in \cref{swig:label2}. Here, $T^1 = T^0 / U_1$, $T^0$ follows a Gamma–Weibull distribution ($\rho_0 = 1$), and $U_1$ follows a Gamma distribution with $\mathrm{Var}(U_1) = 1$ and $\mathbb{E}[U_1] = 3^{1/3}$. Furthermore, $L \sim \mathrm{Bernoulli}(0.5)$, $\beta_{LA} = 0.25$, and $(\tau_0, \tau_1) = (0.5, 0)$ (left), $(0, 0.5)$ (middle), and $(0.5, 0.5)$ (right). See the associated $S_{T^0}$ and $S_{T^a}$ in \cref{fig:sc_fig3}.}
\label{fig:conf-hetero}
\end{figure}

\newpage
\section{Discussion}
In this work, we have formalized the causal interpretation of the acceleration factor estimand in AFT models, and we have shown that it yields an appropriate causal effect measure in the presence of
frailty and treatment effect heterogeneity. The estimated acceleration factor in saturated AFT models can therefore be used directly to answer a causal question. For restricted models, e.g., assuming a time-invariant acceleration factor, the restricted model must be well specified to do so.


The acceleration factor admits a clear causal interpretation as the ratio of quantiles of the potential outcome distributions under both frailty and effect heterogeneity. On the contrary, the parameter of a proportional hazard model does not retain a clear causal interpretation in the presence of neither forms of heterogeneity, and one should additionally derive the survival curves corresponding to the fitted model and use a measure that contrasts the survival curves to quantify the effect. 

In absence of effect heterogeneity, the acceleration factor has both an individual-level interpretation (as the ratio individual potential outcomes) and a population-level interpretation (as the ratio of quantiles of the potential outcome distributions) irrespective of the presence of frailty. In contrast, the hazard ratio has no individual-level interpretation in the presence of frailty.

When effect heterogeneity is present, the conditional and marginal versions of both the acceleration factor and hazard ratio differ, and the marginal estimands admit only population-level interpretations. Consequently, as demonstrated in \cref{sec:homo} and \cref{subsec:effect_hetero}, both marginal estimands may vary over time, even though the corresponding conditional estimands remain time-invariant. This time variation further complicates the causal interpretation of the hazard ratio: reasoning about time-variation in an estimand which conflates causal effects with selection is generally uninformative. For the acceleration factor, however, the estimand itself retains a transparent causal interpretation, and its time variation can be understood directly as the evolving ratio of quantiles of a mixture of potential outcome distributions.


Because effect heterogeneity induces time-variation in the marginal acceleration factor, a time-invariant AFT model is necessarily misspecified in such settings. \Cref{supp:table_bias} in \cref{supp-fig} displays empirically obtained estimands $\mathbb{E}[T^0] / \mathbb{E}[T^1], \, \exp \left( \mathbb{E}[\log T^0] - \mathbb{E}[\log T^1] \right)$ for all examples presented in this paper. This demonstrates that for misspecified time-invariant AFTs (i.e. $U_1$ is present) the estimands of time-invariant AFTs, \cref{eq:est1} and \cref{eq:est2}, can not be viewed as simple and meaningful summary measures of the treatment effect and will depend on the follow-up times. Thus, in the presence of effect heterogeneity, when fitting the misspecified time-invariant AFT model to two studies with different follow-up times, two different estimands are targeted so that the results are not comparable. On the contrary, the time-variant acceleration factor $\theta_m$, is the same for both studies until the shortest end time of both studies and is simply not defined for one of the studies thereafter. Consequently, time-variant AFTs must be employed if effect heterogeneity is believed to be present. 

To account for confounding, AFT models may be fitted using inverse probability of treatment weighting and also offer an intuitive framework for modeling time-to-event data with covariate adjustment. As shown in Section \ref{S:Confounding}, the presented results generalize to a setting with confounding, but in the absence of unmeasured confounding, since conditional AFT models can be used to estimate a standardized acceleration factor which identifies $\theta$. However, it must be clear that in the presence of confounding the marginal acceleration factor itself does not have a valid causal interpretation. Additionally, in the presence of effect heterogeneity, $\theta$ may vary across strata of all variables. As a consequence, stratum-specific acceleration factors should not be interpreted as identifying the marginal causal effect. Practitioners must therefore carefully distinguish between conditional and marginal interpretations when adjusting for confounders, and explicitly justify the absence of unmeasured confounding in order to support causal conclusions based on AFT models. In a setting with time-varying treatments, that we do not consider in this work, more sophisticated methods may be necessary to appropriately adjust for time-varying confounding (\cite{hernan2005structural}).

In summary, we have demonstrated that AFT models offer a satisfactory alternative to proportional hazard models due to the interpretability of the estimands. However, we have illustrated that in the presence of effect heterogeneity it is virtually impossible that a time-invariant AFT model is well-specified so that a time-variant model is necessary for accurate causal inference. The practical implementation of the latter may still present a significant hurdle for practitioners. 

\section*{Acknowledgements}
This work was supported by the South Eastern Norway Health Authority (Grant no. 2019007).

The authors thank the GASTRIC (Global Advanced/Adjuvant Stomach Tumor Research International Collaboration) Group for permission to use their data in Section \ref{S:casestudy}. The investigators who contributed to GASTRIC are listed in References \cite{gastric2010benefit}, \cite{Gastric2013}.

\printbibliography 

@article{Dumas2025,
  author    = {Dumas, E. and Stensrud, M. J.},
  title     = {How hazard ratios can mislead and why it matters in practice},
  journal   = {European Journal of Epidemiology},
  year      = {2025},
  volume    = {40},
  number    = {6},
  pages     = {603--609},
  month     = jun,
  doi       = {10.1007/s10654-025-01250-9},
  pmid      = {40576931},
  pmcid     = {PMC12263472},
  note      = {Epub 2025 Jun 27}
}

@article{Fay2024,
author = {Michael P Fay and Fan Li},
title ={Causal interpretation of the hazard ratio in randomized clinical trials},

journal = {Clinical Trials},
volume = {21},
number = {5},
pages = {623-635},
year = {2024},
doi = {10.1177/17407745241243308},
    note ={PMID: 38679930},

URL = { 
    
        https://doi.org/10.1177/17407745241243308
    
    

},
eprint = { 
    
        https://doi.org/10.1177/17407745241243308
    
    

}
,
    abstract = { Background: Although the hazard ratio has no straightforward causal interpretation, clinical trialists commonly use it as a measure of treatment effect.Methods: We review the definition and examples of causal estimands. We discuss the causal interpretation of the hazard ratio from a two-arm randomized clinical trial, and the implications of proportional hazards assumptions in the context of potential outcomes. We illustrate the application of these concepts in a synthetic model and in a model of the time-varying effects of COVID-19 vaccination.Results: We define causal estimands as having either an individual-level or population-level interpretation. Difference-in-expectation estimands are both individual-level and population-level estimands, whereas without strong untestable assumptions the causal rate ratio and hazard ratio have only population-level interpretations. We caution users against making an incorrect individual-level interpretation, emphasizing that in general a hazard ratio does not on average change each individual’s hazard by a factor. We discuss a potentially valid interpretation of the constant hazard ratio as a population-level causal effect under the proportional hazards assumption.Conclusion: We conclude that the population-level hazard ratio remains a useful estimand, but one must interpret it with appropriate attention to the underlying causal model. This is especially important for interpreting hazard ratios over time. }
}

@article{keiding1997role,
  title={The role of frailty models and accelerated failure time models in describing heterogeneity due to omitted covariates},
  author={Keiding, Niels and Andersen, Per Kragh and Klein, John P},
  journal={Statistics in medicine},
  volume={16},
  number={2},
  pages={215--224},
  year={1997},
  publisher={Wiley Online Library}
}

@article{robins1992semiparametric,
  title={Semiparametric estimation of an accelerated failure time model with time-dependent covariates},
  author={Robins, James and Tsiatis, Anastasios A},
  journal={Biometrika},
  volume={79},
  number={2},
  pages={311--319},
  year={1992},
  publisher={Oxford University Press}
}

@book{cox1984analysis,
  title={Analysis of survival data},
  author={Cox, David Roxbee and Oakes, David},
  volume={21},
  year={1984},
  publisher={CRC press}
}

@article{post2024built,
  title={The built-in selection bias of hazard ratios formalized using structural causal models},
  author={Post, Richard AJ and van den Heuvel, Edwin R and Putter, Hein},
  journal={Lifetime Data Analysis},
  pages={1--35},
  year={2024},
  publisher={Springer}
}

@article{post2024bias,
  title={Bias of the additive hazard model in the presence of causal effect heterogeneity},
  author={Post, Richard AJ and van den Heuvel, Edwin R and Putter, Hein},
  journal={Lifetime Data Analysis},
  pages={1--21},
  year={2024},
  publisher={Springer}
}

@article{aalen2015does,
  title={Does Cox analysis of a randomized survival study yield a causal treatment effect?},
  author={Aalen, Odd O and Cook, Richard J and R{\o}ysland, Kjetil},
  journal={Lifetime data analysis},
  volume={21},
  pages={579--593},
  year={2015},
  publisher={Springer}
}

@article{crowther2023flexible,
  title={A flexible parametric accelerated failure time model and the extension to time-dependent acceleration factors},
  author={Crowther, Michael J et al.},
  journal={Biostatistics},
  volume={24},
  number={3},
  pages={811--831},
  year={2023},
  publisher={Oxford University Press}
}

@article{buyse2016statistical,
  title={Statistical evaluation of surrogate endpoints with examples from cancer clinical trials},
  author={Buyse, Marc et al.},
  journal={Biometrical Journal},
  volume={58},
  number={1},
  pages={104--132},
  year={2016},
  publisher={Wiley Online Library}
}

@article{burzykowski2022semi,
  title={Semi-parametric accelerated failure-time model: A useful alternative to the proportional-hazards model in cancer clinical trials},
  author={Burzykowski, Tomasz},
  journal={Pharmaceutical Statistics},
  volume={21},
  number={2},
  pages={292--308},
  year={2022},
  publisher={Wiley Online Library}
}

@article{Bouche2004,
author = {Bouch\'{e}, Olivier and Raoul, Jean Luc and Bonnetain, Franck and Giovannini, Marc and Etienne, Pierre Luc and Lledo, G\'{e}rard and Ars\`{e}ne, Dominique and Paitel, Jean Francois and Gu\'{e}rin-Meyer, V\'{e}ronique and Mitry, Emmanuel and Buecher, Bruno and Kaminsky, Marie Christine and Seitz, Jean Fran\c{c}ois and Rougier, Philippe and Bedenne, Laurent and Milan, Chantal},
title = {Randomized Multicenter Phase II Trial of a Biweekly Regimen of Fluorouracil and Leucovorin (LV5FU2), LV5FU2 Plus Cisplatin, or LV5FU2 Plus Irinotecan in Patients With Previously Untreated Metastatic Gastric Cancer: A Fédération Francophone de Cancérologie Digestive Group Study—FFCD 9803},
journal = {Journal of Clinical Oncology},
volume = {22},
number = {21},
pages = {4319-4328},
year = {2004},
doi = {10.1200/JCO.2004.01.140},
note ={PMID: 15514373},
URL = { https://doi.org/10.1200/JCO.2004.01.140},
eprint = { 
    
        https://doi.org/10.1200/JCO.2004.01.140
    
    

}

}

@article{hernan2005structural,
  title={Structural accelerated failure time models for survival analysis in studies with time-varying treatments},
  author={Hern{\'a}n, Miguel A and Cole, Stephen R and Margolick, Joseph and Cohen, Mardge and Robins, James},
  journal={Pharmacoepidemiology and drug safety},
  volume={14},
  number={7},
  pages={477--491},
  year={2005},
  publisher={Wiley Online Library}
}

@article{hernan2010hazards,
  title={The hazards of hazard ratios},
  author={Hern{\'a}n, Miguel A},
  journal={Epidemiology},
  volume={21},
  number={1},
  pages={13--15},
  year={2010},
  publisher={LWW}
}

@book{hernan2020,
	title    = {Causal Inference: What If.},
	author    = {Hern{\'{a}}n, Miguel A and Robins, James M.},
	year     = 2020,
        edition = {1st},
	booktitle  = {Causal Inference: What If.},
	publisher  = {Boca Raton: Chapman {\&} Hall/CRC},
	address   = {Boca Raton, Florida},
	url     = {https://www.hsph.harvard.edu/miguel-hernan/causal-inference-book/}
}

@article{Gastric2013,
title = {Role of chemotherapy for advanced/recurrent gastric cancer: An individual-patient-data meta-analysis},
author = {{The GASTRIC Group}},
journal = {European Journal of Cancer},
volume = {49},
number = {7},
pages = {1565-1577},
year = {2013},
issn = {0959-8049},
doi = {https://doi.org/10.1016/j.ejca.2012.12.016},
url = {https://www.sciencedirect.com/science/article/pii/S0959804912009860},
}

@article{martinussen2020subtleties,
  title={Subtleties in the interpretation of hazard contrasts},
  author={Martinussen, Torben and Vansteelandt, Stijn and Andersen, Per Kragh},
  journal={Lifetime Data Analysis},
  volume={26},
  pages={833--855},
  year={2020},
  publisher={Springer}
}

@article{pang2021flexible,
  title={Flexible extension of the accelerated failure time model to account for nonlinear and time-dependent effects of covariates on the hazard},
  author={Pang, Menglan and Platt, Robert W and Schuster, Tibor and Abrahamowicz, Michal},
  journal={Statistical Methods in Medical Research},
  volume={30},
  number={11},
  pages={2526--2542},
  year={2021},
  publisher={SAGE Publications Sage UK: London, England}
}

@article{Wei1992,
author = {Wei, L. J.},
title = {The accelerated failure time model: A useful alternative to the cox regression model in survival analysis},
journal = {Statistics in Medicine},
volume = {11},
number = {14-15},
pages = {1871-1879},
doi = {https://doi.org/10.1002/sim.4780111409},
url = {https://onlinelibrary.wiley.com/doi/abs/10.1002/sim.4780111409},
eprint = {https://onlinelibrary.wiley.com/doi/pdf/10.1002/sim.4780111409},
abstract = {Abstract For the past two decades the Cox proportional hazards model has been used extensively to examine the covariate effects on the hazard function for the failure time variable. On the other hand, the accelerated failure time model, which simply regresses the logarithm of the survival time over the covariates, has seldom been utilized in the analysis of censored survival data. In this article, we review some newly developed linear regression methods for analysing failure time observations. These procedures have sound theoretical justification and can be implemented with an efficient numerical method. The accelerated failure time model has an intuitive physical interpretation and would be a useful alternative to the Cox model in survival analysis.},
year = {1992}
}

@article{gastric2010benefit,
  title={Benefit of adjuvant chemotherapy for resectable gastric cancer: a meta-analysis},
  author={{The GASTRIC Group}},
  journal={Jama},
  volume={303},
  number={17},
  pages={1729--1737},
  year={2010},
  publisher={American Medical Association}
}

@article{Xie2005,
author = {Xie, Jun and Liu, Chaofeng},
title = {Adjusted Kaplan–Meier estimator and log-rank test with inverse probability of treatment weighting for survival data},
journal = {Statistics in Medicine},
volume = {24},
number = {20},
pages = {3089-3110},
keywords = {adjusted Kaplan–Meier estimator, IPTW, survival function, weighted log-rank test},
doi = {https://doi.org/10.1002/sim.2174},
url = {https://onlinelibrary.wiley.com/doi/abs/10.1002/sim.2174},
eprint = {https://onlinelibrary.wiley.com/doi/pdf/10.1002/sim.2174},
abstract = {Abstract Estimation and group comparison of survival curves are two very common issues in survival analysis. In practice, the Kaplan–Meier estimates of survival functions may be biased due to unbalanced distribution of confounders. Here we develop an adjusted Kaplan–Meier estimator (AKME) to reduce confounding effects using inverse probability of treatment weighting (IPTW). Each observation is weighted by its inverse probability of being in a certain group. The AKME is shown to be a consistent estimate of the survival function, and the variance of the AKME is derived. A weighted log-rank test is proposed for comparing group differences of survival functions. Simulation studies are used to illustrate the performance of AKME and the weighted log-rank test. The method proposed here outperforms the Kaplan–Meier estimate, and it does better than or as well as other estimators based on stratification. The AKME and the weighted log-rank test are applied to two real examples: one is the study of times to reinfection of sexually transmitted diseases, and the other is the primary biliary cirrhosis (PBC) study. Copyright © 2005 John Wiley \& Sons, Ltd.},
year = {2005}
}

@article{daniel2021making,
  title={Making apples from oranges: Comparing noncollapsible effect estimators and their standard errors after adjustment for different covariate sets},
  author={Daniel, Rhian and Zhang, Jingjing and Farewell, Daniel},
  journal={Biometrical Journal},
  volume={63},
  number={3},
  pages={528--557},
  year={2021},
  publisher={Wiley Online Library}
}

\appendix
\section{Proofs}
\subsection{Proof of \Cref{prop:1}}
\begin{proof}
The result follows by exchangeability and causal consistency, 
\begin{align*}
    S_{T^a} (t) &= \mathbb{P}(T^a > t) = \mathbb{P}(T^a > t | A = a) = \mathbb{P}(T > t | A = a) = S_{T | A = a }(t ).
\end{align*}
\end{proof}

\subsection{Proof of \Cref{prop:kaplan-meier}}
\begin{proof}
Follows by \cref{prop:1} and the fact that the observed survival functions are identified from censored data under the independent censoring assumption.
\end{proof}

\section{Supplementary figures and tables}
\label{supp-fig}

\begin{supplementary}
    
\begin{figure}[!ht]
\centering
\includegraphics[width=\linewidth]{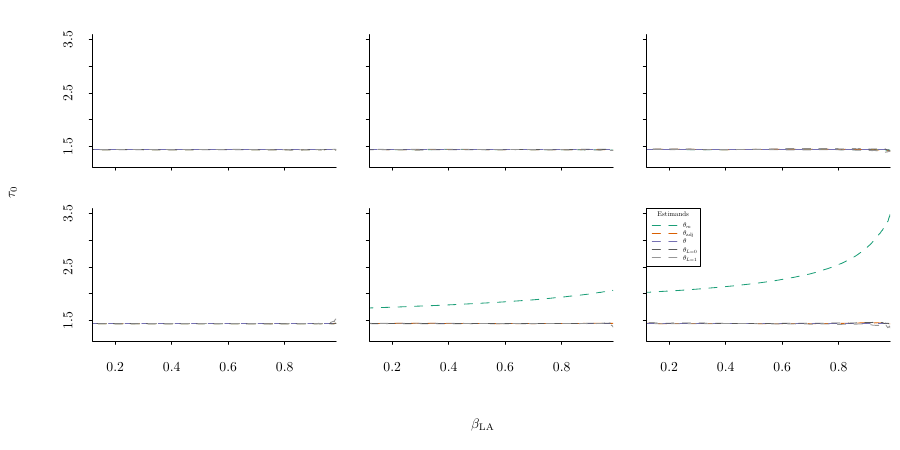}
\caption{$\theta_{\mathrm{adj}}$ (orange), $\theta_{m}$ (green), $\theta$ (purple), $\theta_{L=0}$ (dark gray), $\theta_{L=1}$ (light gray) for the setting in \cref{swig:label1}, where $T^1 = T^0 / 3^{1/3}$, $T^0$ follows a Gamma–Weibull distribution ($\rho_0 = 1$), $L \sim \mathrm{Bernoulli}(0.5)$, $\beta_{LA} \in \{0,0.25,0.45\}$ (left to right), $\tau_0 \in \{0,0.5 \}$ (top to bottom).}
\label{fig:confounders_homo}
\end{figure}

\begin{figure}[!ht]
\centering
\includegraphics[width=\linewidth]{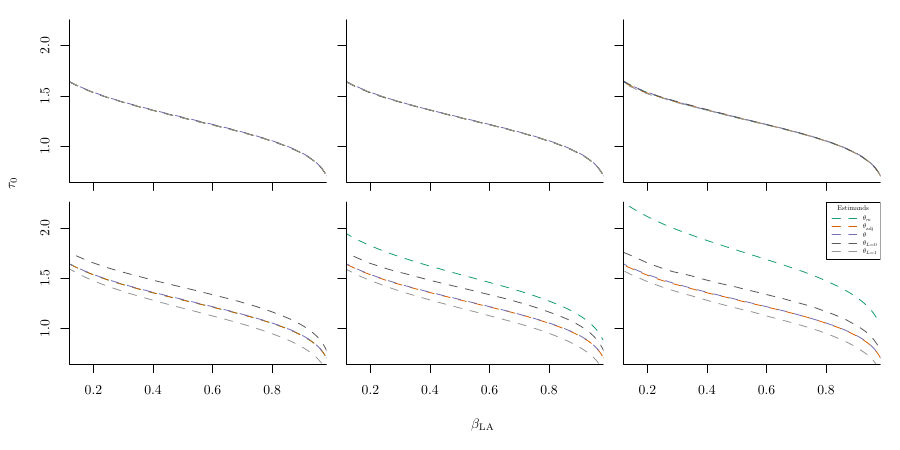}
\caption{$\theta_{\mathrm{adj}}$ (orange), $\theta_{m}$ (green), $\theta$ (purple), $\theta_{L=0}$ (dark gray), $\theta_{L=1}$ (light gray) for the setting in \cref{swig:label2}, where $T^1 = T^0 / U_1$, $T^0$ follows a Gamma–Weibull distribution ($\rho_0 = 1$), $U_1$ follows a Gamma distribution with $\mathrm{Var}(U_1) = 1$, $\mathbb{E}[U_1] = 3^{1/3}$. Furthermore, $L \sim \mathrm{Bernoulli}(0.5)$, $\beta_{\mathrm{LA}} \in \{0.25,0.45,0.5\}$ (left to right), $\tau_0 \in  \{0,0.5\}$ (top to bottom) and $\tau_1 = 0$.  See associated $S_{T^0}$, $S_{T^a}$ in \cref{fig:sc_fig3}.}
\label{fig:confounders_X}
\end{figure}

\begin{figure}[!ht]
\centering
\includegraphics[width=\linewidth]{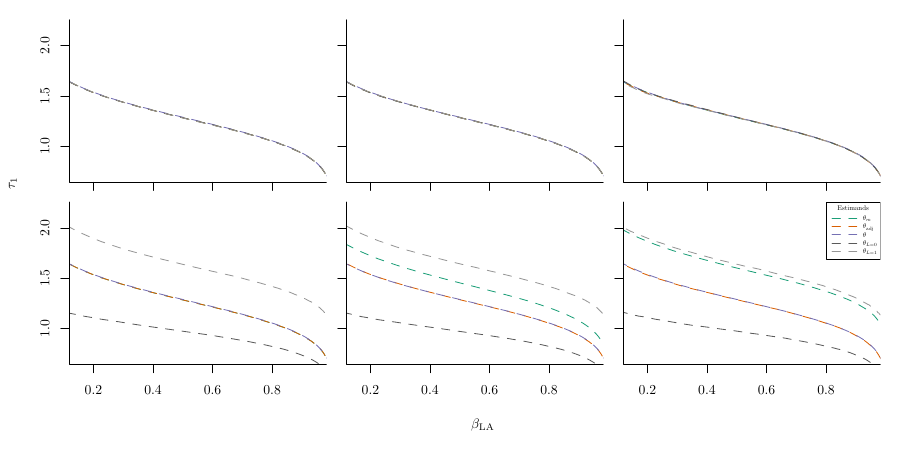}
\caption{$\theta_{\mathrm{adj}}$ (orange), $\theta_{m}$ (green), $\theta$ (purple), $\theta_{L=0}$ (dark gray), $\theta_{L=1}$ (light gray) for the setting in \cref{swig:label2}, where $T^1 = T^0 / U_1$, $T^0$ follows a Gamma–Weibull distribution ($\rho_0 = 1$), $U_1$ follows a Gamma distribution with $\mathrm{Var}(U_1) = 1$, $\mathbb{E}[U_1] = 3^{1/3}$. Furthermore, $L \sim \mathrm{Bernoulli}(0.5)$, $\beta_{\mathrm{LA}} \in \{0.25,0.45,0.5\}$ (left to right), $\tau_1 \in  \{0,0.5\}$ (top to bottom) and $\tau_0 = 0$.  See associated $S_{T^0}$, $S_{T^a}$ in \cref{fig:sc_fig3}.}
\label{fig:confounders_XX}
\end{figure}

\begin{figure}[!ht]
\centering
\includegraphics[width=\linewidth]{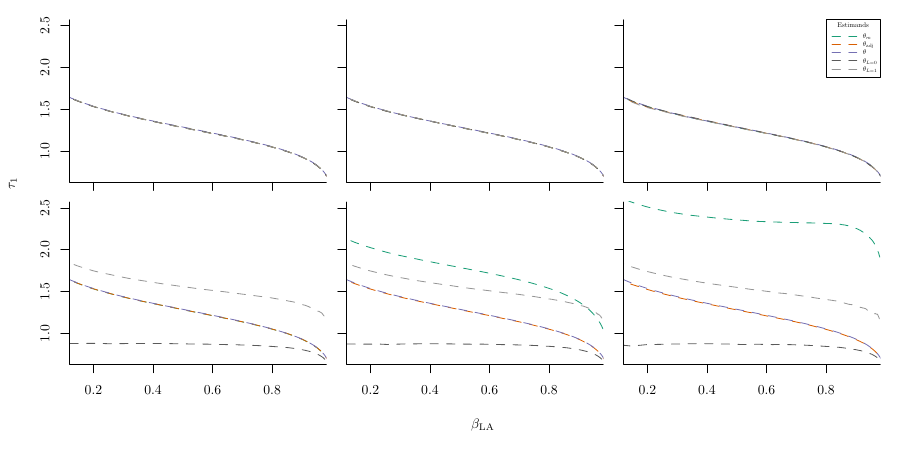}
\caption{$\theta_{\mathrm{adj}}$ (orange), $\theta_{m}$ (green), $\theta$ (purple), $\theta_{L=0}$ (dark gray), $\theta_{L=1}$ (light gray) for the setting in \cref{swig:label2}, where $T^1 = T^0 / U_1$, $T^0$ follows a Gamma–Weibull distribution ($\rho_0 = 1$), $U_1$ follows a Gamma distribution with $\mathrm{Var}(U_1) = 1$, $\mathbb{E}[U_1] = 3^{1/3}$. Furthermore, $L \sim \mathrm{Bernoulli}(0.5)$, $\beta_{\mathrm{LA}} \in \{0.25,0.45,0.5\}$ (left to right), $\tau_1 \in  \{0,0.5\}$ (top to bottom) and $\tau_0 = \tau_1$.  See associated $S_{T^0}$, $S_{T^a}$ in \cref{fig:sc_fig3}.}
\label{fig:confounders_XXX}
\end{figure}    

\begin{figure}[!ht]
\centering
\include{case_study_fig7}
\caption{Empirical (dashed) and model-based (solid) survival functions for the treated ($a=1$) and control ($a=0$) arm in the case study in \cref{S:casestudy}.}
\label{fig:case_study_validation}
\end{figure}

\begin{figure}[!ht]
\centering
\include{surv-curve-tbl1}
\caption{$S_{T^0}$ (black) and $S_{T^1}$ (green and orange) for the setting with $\rho_0 = 1$ in \cref{table:1}, $U_0$ Gamma (left) and inverse-Gaussian (right) distributed.}
\label{fig:sc_tbl1}
\end{figure}

\begin{figure}[!ht]
\centering
\include{surv-curve-fig2}
\caption{$S_{T^0}$ (black) and $S_{T^1}$ (green and orange) for the setting in \cref{fig:effect_hetero}, $T^0$ Weibull-Gamma (left) and $T^0$ Weibull mixture (right) distributed.}
\label{fig:sc_fig2}
\end{figure}

\begin{figure}[!ht]
\centering
\include{surv-curve-fig2.1}
\caption{Survival function components of the mixture in \cref{fig:effect_hetero} (right) and $S_{T^0}(t)$ added for reference. $p_i S_{T^0}(t \mu_i)$ for $\mathbb{E}[U_1] = (1/3)^{1/3} \, ((p_1, \mu_1, p_2, \mu_2) = (0.7, 0.3, 0.05, 5.10))$ (left) and $\mathbb{E}[U_1] = 3^{1/3} \, ((p_1,\mu_1,p_1, \mu_2) = (0.05, 0.5, 0.18, 3.53))$ (right).}
\label{fig:sc_fig2.1}
\end{figure}

\begin{figure}[!ht]
\centering
\include{surv-curve-fig2.2}
\caption{Density function components of the mixture in \cref{fig:effect_hetero} (right) and $f_{T^0}(t)$ added for reference. $p_i \mu_i f_{T^0}(t \mu_i)$ for $\mathbb{E}[U_1] = (1/3)^{1/3} \, ((p_1, \mu_1, p_2, \mu_2) = (0.7, 0.3, 0.05, 5.10))$ (left) and $\mathbb{E}[U_1] = 3^{1/3} \, ((p_1,\mu_1,p_1, \mu_2) = (0.05, 0.5, 0.18, 3.53))$ (right).}
\label{fig:sc_fig2.2}
\end{figure}

\begin{figure}[!ht]
\centering
\include{surv-curve-fig3}
\caption{$S_{T^0}$ (black) and $S_{T^1}$ (green and orange) for the setting in \cref{fig:effect_hetero2}.}
\label{fig:sc_fig3}
\end{figure}

\newpage

\renewcommand{\arraystretch}{1.4}
\begin{landscape}
\centering
\begin{table}[h]
\centering
\begin{tabular}{c||ccccccccc}
Example &
  $T^0 \sim $ &
  $U_0$ dist. &
  $\mathrm{Var}(U_0)$ &
  $U_1$ dist. &
  $\mathrm{Var}(U_1)$ &
  $\mathbb{E}[U_1]$ &
  \multicolumn{1}{c}{$\mathbb{E}[T^0]/\mathbb{E}[T^1]$} &
  \multicolumn{1}{c}{$\exp \left( \mathbb{E}[\log T^0] - \mathbb{E}[\log T^a] \right)$} &
   \\ \cline{1-9}
\multirow{6}{*}{Table 1 (a)} & Weibull       & Gamma & 0.5 & degenerate & 0   & $0.693$  & $0.693$ & $0.693$ &  \\
                            & Weibull         & Gamma & 1   & degenerate & 0   & $0.693$  & $0.693$  & $0.693$ &  \\
                            & Weibull         & Gamma & 2   & degenerate & 0   & $0.693$  & $0.693$  & $0.693$ &  \\
                            & Weibull         & IG    & 0.5 & degenerate & 0   &  $0.693$ & $0.693$ & $0.693$  &  \\
                            & Weibull         & IG    & 1   & degenerate & 0   &  $0.693$  & $0.693$ & $0.693$  &  \\
                            & Weibull         & IG    & 2   & degenerate & 0   &  $0.693$  & $0.693$  & $0.693$  &  \\ \cline{1-9}
Figure 2 (left)             & Weibull         & Gamma & 1   & BHN      & 1   & $0.693$ & 0.385 & 0.001 &  \\ \cline{1-9}
Figure 2 (right)            & Weibull mixture & -     & -   & BHN      & 1   & $0.693$ & 0.385 & 0.000 &  \\ \cline{1-9}
\multirow{3}{*}{Figure 3}   & Weibull         & Gamma & 1   & Gamma    & 0.5 & $0.693$ & 0.023 & 0.000 &  \\
                            & Weibull         & Gamma & 1   & Gamma    & 1   & $0.693$ & 0.000  & 0.000 &  \\
                            & Weibull         & Gamma & 1   & Gamma    & 2   & $0.693$ &  0.000 & 0.000  &  \\ \cmidrule{1-9}\morecmidrules\cmidrule{1-9}
\multirow{6}{*}{Table 1 (b)} & Weibull       & Gamma & 0.5 & degenerate & 0   &  $1.442$      & 1.442 & 1.442  &  \\
                            & Weibull         & Gamma & 1   & degenerate & 0   & $1.442$      & 1.442  & 1.442 &  \\
                            & Weibull         & Gamma & 2   & degenerate & 0   & $1.442$      & 1.442  & 1.442 &  \\
                            & Weibull         & IG    & 0.5 & degenerate & 0   & $1.442$      & 1.442  & 1.442 &  \\
                            & Weibull         & IG    & 1   & degenerate & 0   & $1.442$      & 1.442 & 1.442 &  \\
                            & Weibull         & IG    & 2   & degenerate & 0   & $1.442$      & 1.442  & 1.442  &  \\ \cline{1-9}
Figure 2 (left)             & Weibull         & Gamma & 1   & BHN      & 1   & $1.442$     & 1.090  & 1.477 &  \\ \cline{1-9}
Figure 2 (right)            & Weibull mixture & -     & -   & BHN      & 1   & $1.442$     & 1.090  & 1.572  &  \\ \cline{1-9}
\multirow{3}{*}{Figure 3}   & Weibull         & Gamma & 1   & Gamma    & 0.5 & $1.442$     &  1.096 & 1.513 &  \\
                            & Weibull         & Gamma & 1   & Gamma    & 1   & $1.442$     & 0.750 & 0.206 &  \\
                            & Weibull         & Gamma & 1   & Gamma    & 2   & $1.442$     & 0.128  & 0.000  & 
\end{tabular}
\caption{Empirically obtained estimands ($n_{\mathrm{obs}} = 10^6$) for examples considered in paper. $ T^0 \sim \mathrm{Weibull}(60,1/3)$, except for Weibull mixture $T^0$, where $T^0  \sim \mathrm{Weibull}(\Lambda,2), \Lambda \sim X / \Gamma(1 + 1/2), X $ categorical ($\mathbb{P}(X = 1 ) = \mathbb{P}(X=10) = 0.5 $).  All $U_0$ distributions are parametrized such that $\mathbb{E}[U_0] = 1$.}
\label{supp:table_bias}
\end{table}
\end{landscape}
\end{supplementary}
\end{document}